\newcommand{\Ac}{\mathcal{A}}
\newcommand{\Ec}{\mathcal{E}}
\newcommand{\Sc}{\mathcal{S}}
\newcommand{\Uc}{\mathcal{U}}
\newcommand{\Xc}{\mathcal{X}}
\newcommand{\Yc}{\mathcal{Y}}
\newcommand{\uh}{{\hat{u}}}
\def\a{\alpha}
\def\g{\gamma}
\def\l{\lambda}
\DeclareMathOperator\E{E}
\let\P\relax
\DeclareMathOperator\P{P}
\newcommand\ie{i.e.,\xspace}
\def\textiid{i.i.d.\@\xspace}
\newcommand\iid{\ifmmode\text{ i.i.d. } \else \textiid \fi}
\newcommand{\ind}{\mathbbmss{1}}
\newtheorem{remark}{Remark}
\newtheorem{theorem}{Theorem}
\newtheorem{lemma}{Lemma}
\begin{document}

\title{Outage Analysis of  Uplink  Two-tier Networks}

\author{Zolfa Zeinalpour-Yazdi~\IEEEmembership{}
        and~Shirin Jalali~\IEEEmembership{}% <-this % stops a space
      \thanks{Z. Zeinalpour-Yazdi is with the Department of Electrical and computer Engineering, Yazd University, Yazd, Iran (e-mail: zeinalpour@yazd.ac.ir),}
      \thanks{S. Jalali is with the Department of   Electrical  Engineering, Princeton university, NJ 08540 (e-mail: sjalali@princeton.edu)}}

 \maketitle

\newcommand{\p}{\mathds{P}}
\newcommand{\Lc}{\mathcal{L}}
\newcommand{\mb}{\mathbf{m}}
\newcommand{\bb}{\mathbf{b}}
\newcommand{\Xb}{\mathbf{X}}
\newcommand{\Yb}{\mathbf{Y}}
\newcommand{\Ub}{\mathbf{U}}
\newcommand{\La}{\Lambda}
\newcommand{\su}{\underline{s}}
\newcommand{\xu}{\underline{x}}
\newcommand{\yu}{\underline{y}}
\newcommand{\Xu}{\underline{X}}
\newcommand{\Yu}{\underline{Y}}
\newcommand{\Uu}{\underline{U}}
\newcommand{\ex}{{\rm e}}
\newcommand{\SIR}{{\rm SIR}}

\begin{abstract}
Employing multi-tier  networks is among   the most promising approaches to address the rapid growth of the data demand in cellular networks. In this paper, we study a two-tier uplink cellular network consisting of femtocells and a macrocell. Femto base stations, and  femto and macro users are assumed to be spatially deployed based on independent  Poisson point processes. We  consider an open access assignment policy, where each macro user based on the ratio between  its distances from its nearest femto access point (FAP) and  from the macro base  station (MBS) is  assigned to either of them. By tuning the threshold, this policy  allows controlling the coverage areas of FAPs.  For a fixed threshold, femtocells coverage areas  depend on their distances from the MBS; Those closest to the fringes will have the largest coverage areas. Under this  open-access policy, ignoring the additive noise, we  derive analytical upper and lower bounds on the outage probabilities  of  femto users and macro users that are subject to fading and path loss. We also study the effect of the distance from the MBS on the outage probability experienced by the users of a femtocell.  In all cases, our simulation results comply with our analytical bounds.
\end{abstract}

\begin{keywords}
Heterogeneous networks, Uplink communication, Outage, Open access policy, Poisson point process
\end{keywords}
%----------------------%----------------------%----------------------
\section{Introduction}
Wireless cellular networks, originally designed for voice communications, are nowadays commonly used for surfing the Internet or communicating image, audio or video files. This massive unpredicted overhead load  has urged communication engineers to develop new approaches to design and employment of  cellular communication systems. One of such relatively new techniques, which has been proved to be successful,  is employing multi-tier networks. For instance, in the case of  two-tier networks, the existing cellular  network is overlaid by {\em femtocells}, which are employed by users in an ad-hoc manner at their homes or offices.

Analytical performance evaluation  of  cellular networks has  always been a complicated task.  Modeling various aspects of cellular networks, such as the  physical channel itself,   has been a cornerstone of such analysis and therefore    the subject of  extensive  research for many years.   Modeling the users' and cells' locations is another  aspect of a cellular network that also plays a major role in analytical evaluations. Traditionally, the idealized  grid model has been  employed  to model the locations of the cells and their coverage areas. This model, although  simple to describe, is intractable for most analytical evaluations and is also arguably  not very accurate. This is especially true  in heterogenous networks with ad hoc employment of  small cells.  More comprehensive and recent models for spatial distributions of the cells and users are  models based on stochastic geometric tools such as Poisson point process  (PPP). Such models are  advantageous  from  two main  perspectives: first, they provide a more realistic model of cellular networks compared to the traditional grid-based  models, second, they make  the analysis  more tractable.

In this paper, we analyze the outage performance of an uplink two-tier network with a MBS located at the center of  a circle representing  its coverage area; macro users (MU), femto users (FU) and FAPs are assumed to be spatially distributed within the circle  randomly and independently according to PPPs with different densities. We  consider the \emph{open access policy} studied in \cite{CheungQ:12,JoS:12} for downlink communication. This model covers closed access policy as a special case and  allows optimizing the coverage areas of the FAPs when  the system parameters   vary.  Using this model, we derive tight upper and lower bounds on the outage probabilities of users covered by the MBS and also FUs and MUs that are covered by FAPs. To achieve this goal we first derive upper and lower bounds on the Laplace transform of the number of MUs serviced by the base station. We also derive the   Laplace transform  of the number of MUs covered by a FAP located at a specific distance from the MBS. Employing the Laplace transforms of the number of users in each  group, plus some geometric analysis, we bound our desired outage probabilities.     Our simulation results confirm  and comply with our bounds.

\subsection{Related work}

While employing PPP as a stochastic model for users or access points distributions was originally proposed in 1997 in  \cite{BaccelliZ:97,BaccelliK:97,Brown:00},  it was not until recently that this model was used for analyzing the performance of wireless cellular networks. (Refer to \cite{HaenggiA:09} for a review of this model.)
The stochastic geometric-based models such as PPP was  employed by Baccelli et al.~in \cite{BaccelliM:09} to analyze large mobile ad hoc network (MANET) and by Andrews et al.~in \cite{AndrewsB:09} to study  the downlink performance of cellular networks. Later, this model was  used for analyzing the downlink performance of multi-tier networks  \cite{DhillonG:12, Mukherjee:12,WangQ:12}. However, as  mentioned in \cite{AndrewsC:12},  similar analysis for uplink communication has been missing until very recently.

Multi-tier networks have been studied from different perspectives  such as power control \cite{ChandrasekharA:09-power,JoM:09}, spectrum allocation \cite{ChandrasekharA:09-tcom,ChaZ:13}, and  exploiting cognitive radio techniques \cite{XiangZ:10,ShiH:10}. These are just few examples of some related work and by no means are meant to be a comprehensive review of the literature. (See \cite{ChandrasekharA:08, ClaussenH:08, AndrewsC:12} and the references therein for a relatively  comprehensive   review of the literature.)

 Analytic study of the outage performance of a single-tier network with nodes  distributed according to a PPP  is done in \cite{NovlanD:13}. Uplink performance  of  two-tier networks has been studied  in the literature under different models and approximations.  While most of the work on this topic has been on traditional grid model, recently there has been several results on analyzing uplink performance of two-tier networks under PPP model for users and access points. Chandrasekhar et al.~ study  outage probabilities of femto and MUs  distributed according to PPPs in a reference macrocell in  \cite{ChandrasekharA:09}. The authors consider a  CDMA-based model under closed access and approximate the outage probability. Xia et al. in \cite{XiaC:10} compare closed access versus open access policy in an uplink communication. In their analysis, they consider a reference macrocell with the base station located at the center, and a single FAP located at  a specific distance  from the base station. The MUs are assumed to be distributed independently at random. They suggest that while  for orthogonal multiple access schemes such as TDMA or OFDMA the choice of open versus closed depends on the users density, in non-orthogonal schemes such as CDMA open access is strictly better than closed access.

The uplink performance of macrocells overlaid with femtocells is also studied  in \cite{ChakchoukH:12}. There, while the authors consider PPP spatial distribution for MUs, FUs and FAPs, the users  assignment policy is closed access, and by assuming a TDMA scheme they limit the number of active users in each femtocell per  time slot to one. In \cite{YuM:12}, the authors study the distribution of the  signal to interference plus
noise ratio (SINR)  in both uplink and downlink, when  time division duplex (TDD) is employed. In their setup, the users of each tier are distributed according to a PPP and each user connects to the closest base station.

In an independent  work, which the authors became aware of right before submitting this paper, Bao et al.~analyze the interference and outage performance of  a two-tier uplink network under closed access policy \cite{BaoL:13}.  The authors of \cite{BaoL:13} also study the open access policy in  a subsequent paper \cite{BaoL:13-acm}. While the ultimate goals in \cite{BaoL:13-acm} and this paper are  the same, there are some major differences betweens the two.  First, unlike this paper,  in \cite{BaoL:13-acm}, each femtocell is assumed to have a fixed coverage area, and a MU is handed off to the FAP if it falls within that fixed coverage area. Here, we consider a different open access policy, where each MU decides to connect to either its closest FAP or the MBS, based on its distances from them. This policy leads to FAPs having different coverage areas, depending on their distances to the MBS. This assignment policy introduces new geometrical aspects to our outage analysis. Second, unlike \cite{BaoL:13-acm}, we derive closed-form expressions for our  upper and lowers bounds on the outage probabilities of MUs and FUs. For a  MU serviced by the MBS, we study and bound its outage performance as a function of its distance of the FAP from the MBS.
 Finally, here we consider multi-carrier frequency hopping modulation, which provides a decentralized alternative to OFDM.  In \cite{BaoL:13-acm}, the authors consider a single shared  channel for all users. Finally, one of the reviewers pointed us to the work of ElSawy et al.~\cite{ElSawyH:14}, which has appeared on Arxiv after  our initial submission. In \cite{ElSawyH:14}, the authors study the uplink performance of a multi-tier network under a different access policy where each user connects to its closest access point (femto or macro).

\subsection{Notation}
Calligraphic letters such as $\Xc$ and $\Yc$ represent  sets. The size of set $\Xc$ is denoted by $|\Xc|$.
Given sample space $\Omega$ and  event $\Ec\subseteq \Omega$, $\ind_{\Ec}$ is an indicator random variable that is one when  event $\Ec$ happens.  For $1\leq i\leq j \leq n$, $x_i^j \triangleq (x_i,x_{i+1},\ldots,x_j)$. Also, for simplicity  $x^i=x_1^i$. Uppercase letter characters such as $X$ and $Y$ are  used for  matrices and random variables.

\subsection{Paper organization}
The organization of the paper is as follows. Section \ref{sec:model} reviews the network model studied in this paper from various perspectives: modulation technique, spatial distributions of  users, channel model and access policy. In Section \ref{sec:dist}, we study the users density distributions. The results of this section is used extensively in the following sections in analyzing the performance of the system.   In Section \ref{sec:outage}, which contains the main results of the paper, we analyze the femto and MUs outage probabilities. Section \ref{sec:numerical} presents the simulation results and compares them with our analytical bounds. Finally, Section \ref{sec:conclusion} concludes the paper.

%----------------------%----------------------%----------------------
\section{System model}\label{sec:model}

\subsection{MCFH technique}

Orthogonal frequency devision multiplexing (OFDM) is  a widely popular  multiple access method in wireless networks, and has received a lot of attention in recent years. In an OFDM-based multiple access  system, the  carrier frequencies are assigned  by the central node. (Three different methods for assigning frequencies are described in \cite{Lawrey99}.) However, this centralized frequency assignment  is not quite desirable  for  emerging decentralized wireless cellular  networks such as femtocells, where, due to practical challenges', it is preferred   to minimize the coordination between  the central  and the femto base stations.

Multicarrier frequency-hopping  (MCFH) modulation introduced in \cite{LanceK:97}, and later analyzed by various researchers  \cite{EbrahimiN:04, YazdiN:04,TaghaviN:03,NikjahB:08}, provides  a decentralized alternative to OFDM modulation. In  MCFH, similar to OFDM,  all  sub-carriers are orthogonal to each other. However, unlike OFDM,  in a multi-user setup, the carriers are not assigned to the users by a central node, and the users  are allowed to randomly and independently  select their carriers.
In addition to being decentralized, another advantage of  MCFH to OFDM, as  will become clear throughout the paper, is that it makes the model more amenable to direct analysis. The results of such analysis will provide insight on how to select the systems' parameters in an OFDM-based system as well.
In this paper, we assume that all users adopt MCFH modulation. While MCFH is clearly  different from  OFDM,   most of our results will continue to hold for OFDM-based systems with some mild adjustments.

In MCFH, the available bandwidth is divided into $n_s$ non-overlapping adjacent subbands. Each subband respectively is divided into  $n_h$  equispaced frequencies. Hence, overall,  there will be  $n_sn_h$ available \emph{subchannels}. (It is usually said that   the system's processing gain ($G$) is equal to $n_sn_h$.) At each time, each user  uniformly at random  selects one of the $n_h$ carriers  in each subband.  Fig.~\ref{fig:mcfh} shows the carrier selections  in a  simple MCFH system with $n_s=3$ and $n_h=4$ and two users.  As shown in the figure, since unlike OFDM,  users select their carriers independently with no coordination, it is possible that two users send data over the same frequency  simultaneously.

\begin{figure}
\begin{center}
\includegraphics[width=6.5cm]{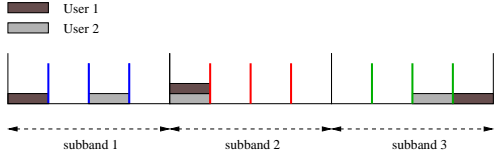}\caption{Depiction of MCFH frequency assignments.}\label{fig:mcfh}\end{center}
\end{figure}

\subsection{Spatial distribution}

For spatial distribution of  MUs, FUs and FAPs, we follow the model introduced in \cite{CheungQ:12}. We consider a MBS $b_m$ located at the center of a circle of radius $R$ denoted by $\Sc_m$.    FAPs $\Ac_f$ are distributed according to a PPP with density $\l_f$. Therefore, the number of FAPs ($|\Ac_f|$) is distributed as ${\rm Poiss}(\bar{n}_{\rm fap})$, where $\bar{n}_{\rm fap}\triangleq\pi R^2\l_f$. Conditioned on $|\Ac_f|=m$, the locations of the $m$ FAPs are uniformly distributed over $\Sc_m$.  Independently, MUs $\Uc_m$ are distributed based on a PPP with density $\mu_m$. Note that  ``MUs'' are users that are not inside a home, office, etc.~that is equipped with a FAP. However, a MU might be served by a FAP based on its distance from the MBS and the locations of surrounding FAPs.  Finally, FUs of FAP $a_f\in{\cal A}_f$ are  distributed according to a PPP with density $\mu_f$   restricted to a ring of internal radius $r_f$ and width $\Delta$ centered at $a_f$.  For FAP $a_f\in\Ac_f$, let $\Uc_f(a_f)$  and $\Uc_m(a_f)$ denote the set of  FUs and MUs serviced by the FAP $a_f$, respectively. Clearly, $\cup_{a_f\in\Ac_f}\Uc_m(a_f)\subseteq \Uc_m.$

Various studies indicate that open access policies have superior performance both from the perspective of the FUs (in uplink) and MUs (in downlink). Therefore, in this paper we focus on a two-tier network with open access policy. The specific access policy that we consider is described in Section \ref{section:access}.

\begin{remark}
In our analysis we consider a single MBS located at the center of a circle of radius $R$. In reality of course there are more MBSs. The placement of the macrocells can be modeled either as a deterministic process or random based on an independent PPP with density $\lambda_m$. In both cases, it is reasonable to assume that each MU connects to its closest MBS, and hence divide the plane based on the Voronoi partition determined by the locations of  MBSs.  Assuming that the MBSs employ one of the known frequency reuse methods, are hence orthogonalize the users of neighboring macrocells, then, without loss of generality, in the analysis one can  focus on the  case where there is only one MBS. In a random setting, where MSBs are employed according to a PPP of density $\lambda_m$, it is proved in [35] that the expected number of FAPs in a ``typical'' macrocell becomes equal to $\lambda_f/\lambda_m$. Based on this result, choosing $\lambda_m=1/(\pi R^2)$, the expected number of FAPs in a ``typical'' macrocell is equal to $\lambda_f/\lambda_m=\pi R^2 \lambda_f$, which is consistent by our model in this paper. By controlling radius $R$, we can  study the effect MBSs' density $\lambda_m$ on the performance.
\end{remark}

\subsection{Channel Model}
We consider  both small scale fading and path loss. Let $h^i_{u,a_f}$ and $h^i_{u,b_m}$ denote the fading coefficients corresponding to the channel in subband $i\in [1:n_s]$ from user $u$ to  FAP $a_f$ and to MBS $b_m$, respectively.  We consider a slow-fading channel model, and assume that  the fading coefficients remain constant during the whole coding block. Furthermore, we assume that the coefficients corresponding to different subbands and also different channels are all independent.  The channel coefficients  are assumed to have a  Rayleigh distribution. That is, the  power attenuation coefficient  $|h_{u,a}^i|^2$, where  $a\in\{a_f,b_m\}$, is exponentially distributed as $\P\left(|h_{u,a}^i|^2 >x\right)= {\rm e}^{- x /{\sigma}^2}$,
for $x\geq 0$.  The path loss affecting the  signal  transmitted by user $u$ to base station $a$, $a\in\{a_f,b_m\}$, is modeled as $\texttt{PL}_{u,a}=L_0 d_{u,a}^{\alpha},$
where $L_0$ is path loss at unit distance, and $\alpha>2$ denotes the attenuation factor \cite{JoS:12}.

\vspace{-1em}
\subsection{Access policy}\label{section:access}

Consider macro user $u\in\Uc_m$. For user $u$ and FAP or MBS $a$, let $d(u,a)$ denote their Euclidian distance. Further, let  $d_{u}^{(f)}$ denote the distance between user $u$ and its nearest  FAP, i.e.,  $d_{u}^{(f)}\triangleq \min\{d(u,a_f): a_f\in\Ac_f\}$. As mentioned earlier, we focus on open access policy, where MUs can also be serviced  by FAPs. We consider the following open access  policy, which was considered in  \cite{CheungQ:12}. Let  $\kappa<1$ be a parameter of the system. Then, according to this assignment policy,
\begin{enumerate}
\item if $d_{u_m}^{(f)}<\kappa d(u_m,b_m)$, then MU $u_m$ is assigned to its closest FAP,
\item if $d_{u_m}^{(f)}\geq \kappa d(u_m,b_m)$, then MU $u_m$ is assigned to the MBS $b_m$.
\end{enumerate}
Letting $\kappa=0$, requires all MUs  to be serviced by the base station, which is   equivalent to having a  closed access assignment policy.  $\kappa$ controls the coverage areas of FAPs and  increasing it enlarges the coverage areas.

As defined earlier, $\Uc_m(a_f)\subset\Uc_m$  denotes the set of MUs that are serviced by FAP $a_f\in\Ac_f$. Let $\Uc_m^{(-f)}$ denote all  MUs that are not  serviced by FAPs, \ie $\Uc_m^{(-f)}=\Uc_m\backslash (\cup_{a_f\in\Ac_f}\Uc_m(a_f))$.

%----------------------%----------------------%----------------------
\section{Users density distribution}\label{sec:dist}

Before analyzing the signal to interference ratios (SIR)  experienced by different users in different groups, in this section we study the distributions of $N_f^{a_f}\triangleq |\Uc_f(a_f)|$, $N_m^{a_f}\triangleq |\Uc_m(a_f)|$ and $N_m^{b_m}\triangleq |\Uc_m^{(-f)}|$. By our assumption, the FUs are distributed in a ring of width $\Delta$ and internal radius $r_f$. Hence, $ N_f^{a_f}\sim {\rm Poiss}(\bar{n}_{\rm fu})$, where $\bar{n}_{\rm fu}\triangleq \pi((r_f+\Delta)^2-r_f^2)\mu_f$ denotes the expected number of FUs in each FAP, and  its Laplace transform $\Phi_{N_f^{a_f}}(s)$ is equal to \vspace{-1em}
\begin{align}
\Phi_{N_f^{a_f}}(s)\triangleq \E[\ex^{-s N_f^{a_f}}]=\ex^{\bar{n}_{\rm fu}(\ex^{-s}-1)}.\label{eq:Phi-Uf-af}
\end{align}
Also, let $\bar{n}_{\rm mu}\triangleq \pi R^2 \mu_m$ denote the expected number of all  MUs.
\begin{figure}
\begin{center}
\psfrag{bm}{ $b_m$}
\psfrag{af}{ $a_f$}
\psfrag{d}{ $r$}
\psfrag{r}{ $\hspace{-0.14cm}r_c$}
\includegraphics[width=6cm]{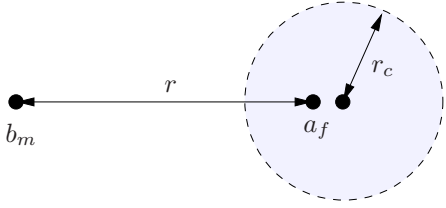}\caption{MUs served by FAP $a_f$ located at distance $r$ from MBS $b_m$.}\vspace{-2.5em}\label{fig:MU-covered-FAP}\end{center}
\end{figure}

Consider  MBS $b_m$ and FAP $a_f$ shown in Fig.~\ref{fig:MU-covered-FAP} that are located at distance $r$ from each other. MU $u_m$ is served by FAP $a_f$ instead of $b_m$, if $d(u_m,a_f)<\kappa d(u_m,b_m)$, where $\kappa<1$. Translating this condition into cartesian coordinate dimensions with the origin located at $b_m$ and the $x$-axis along the line connection $b_m$ to $a_f$, we obtain $(x-r)^2+y^2\leq \kappa^2 (x^2+y^2)$,
or
$(1-\kappa^2)x^2+(1-\kappa^2)y^2 -2rx+r^2\leq 0.$ In other words,
\[
(x-{r\over 1-\kappa^2})^2+y^2\leq r^2\Big({1\over (1-\kappa^2)^2}-{1\over 1-\kappa^2}\Big),
\]
which is equivalent to a circle of radius $r_c={\kappa r\over 1-\kappa^2}$
centered at $(r/(1-\kappa^2),0)$. Therefore,  the coverage area of each FAP depends on its distance from the base station. As the distance increases, the coverage area, and  the expected number of covered MUs  increase as well. In summary, given FAP $a_f$ located at distance $d_f=d(b_m,a_f)$, the coverage area of $a_f$ for MUs, \ie the area in which MUs are serviced by FAP $a_f$ is a circle of radius $\sqrt{\gamma} d_f$, where \vspace{-.5em}
\begin{align}
\g\triangleq {\kappa^2\over (1-\kappa^2)^2}.\label{eq:gamma-def}
\end{align}
Therefore, $N_m^{a_f}\sim {\rm Poiss}( \pi \g d_f^2\mu_m)$, where $\bar{n}_{\rm mu}^f\triangleq\pi\gamma d_f^2 \mu_m$, and \vspace{-.5em}
\begin{align}
\Phi_{N_m^{a_f}}(s|d_f)& \triangleq  \E[\ex^{-s N_m^{a_f}}|d(a_f,d_f)=d_f]=\ex^{\bar{n}_{\rm mu}^f(\ex^{-s}-1)}.\label{eq:Phi-Um-af}
\end{align}
In some cases, in our analysis we are interested in the distribution of $N_m^{a_f}-1$, conditioned on  $N_m^{a_f}\geq 1$. For that we define,\vspace{-.5em}
\begin{align}
\Phi^+_{N_m^{a_f}}(s|d_f)&\triangleq \E\big[\ex^{-s (N_m^{a_f}-1)}\big|d(a_f,d_f)
=d_f,N_m^{a_f}\geq1\big]\nonumber\\
&=({\ex^{\bar{n}_{\rm mu}^f(\ex^{-s}-1)}-\ex^{-\bar{n}_{\rm mu}^f}\over 1-\ex^{-\bar{n}_{\rm mu}^f}})\ex^s.\label{eq:Phi-Um-af-plus}
\end{align}

Finally, we study $N_m^{b_m}$. Conditioned on $\Ac_f$ (locations of FAPs),  $N_m^{b_m}$ is a Poisson random variable of mean $\mu_m S_{-f}$, where $S_{-f}$ denotes the area exclusively  covered only by $b_m$ and not FAPs. Therefore, $\Phi_{N_m^{b_m}}(s)=\E[\ex^{\mu_mS_{-f}(\ex^{-s}-1)}].$
%-------- %-------- %-------- Theorem 1 %-------- %-------- %--------
%\enlargethispage{2em}
 \begin{theorem}\label{thm:1}
For $s\geq 0$, the Laplace transform  of $N_m^{b_m}$, $ \Phi_{N_m^{b_m}}(s)$, satisfies \vspace{-.5em}
\[
\Phi_{N_m^{b_m}}(s) \geq \ex^{ \bar{n}_{\rm mu}(\ex^{-s}-1)}\vspace{-.5em}
\]
and $\Phi_{N_m^{b_m}}(s) \leq \ex^{\bar{n}_{\rm mu}(\ex^{-s}-1) + \bar{n}_{\rm fap} (\tau(s)-1)} +\ex^{\g^{-1}-\bar{n}_{\rm fap} +\g^{-1}\log (\g \bar{n}_{\rm fap})},$  where
\begin{align}
\tau(s)\triangleq {\ex^{(1-\ex^{-s})\g \bar{n}_{\rm mu}}-1 \over (1-\ex^{-s})\g \bar{n}_{\rm mu}},\vspace{-.5em}\label{eq:tau-s}
\end{align}
and $\g$ is defined in \eqref{eq:gamma-def}.
\end{theorem}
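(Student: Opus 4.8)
The plan is to start from the identity recorded immediately before the theorem, $\Phi_{N_m^{b_m}}(s)=\E[\ex^{\mu_m S_{-f}(\ex^{-s}-1)}]$, and to control $S_{-f}$ geometrically. Since $S_{-f}$ is the area of the part of $\Sc_m$ lying outside every FAP coverage disk, writing $d_f\triangleq d(b_m,a_f)$ and letting $\Dc(a_f)$ be the disk of radius $\sqrt{\g}\,d_f$ (hence of area $\pi\g d_f^2$) identified in the paragraph preceding \eqref{eq:gamma-def}, I would write $S_{-f}=\pi R^2-A$ with $A\triangleq|\Sc_m\cap(\bigcup_{a_f\in\Ac_f}\Dc(a_f))|$, so that $0\le A\le\min\!\big(\pi R^2,\ \sum_{a_f\in\Ac_f}\pi\g d_f^2\big)$, the second entry being the union bound for the area of a union. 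The lower bound is then immediate: for $s\ge0$ one has $\ex^{-s}-1\le0$, so $S_{-f}\le\pi R^2$ gives $\mu_m S_{-f}(\ex^{-s}-1)\ge\mu_m\pi R^2(\ex^{-s}-1)=\bar{n}_{\rm mu}(\ex^{-s}-1)$ surely, and taking expectations yields $\Phi_{N_m^{b_m}}(s)\ge\ex^{\bar{n}_{\rm mu}(\ex^{-s}-1)}$.

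For the upper bound I would peel off the deterministic part, $\mu_m S_{-f}(\ex^{-s}-1)=\bar{n}_{\rm mu}(\ex^{-s}-1)+\mu_m A(1-\ex^{-s})$, so that $\Phi_{N_m^{b_m}}(s)=\ex^{\bar{n}_{\rm mu}(\ex^{-s}-1)}\,\E[\ex^{\mu_m A(1-\ex^{-s})}]$, and then condition on $|\Ac_f|$ in two regimes. If $|\Ac_f|=m<1/\g$ (so that the union bound on the covered area still does not exceed $\pi R^2$ and is worth keeping), bound $A\le\sum_{a_f}\pi\g d_f^2$; conditioned on $|\Ac_f|=m$ the FAP positions are i.i.d.\ uniform on $\Sc_m$, hence each $d_f^2$ is uniform on $[0,R^2]$, and a one-line integral together with $\mu_m\pi R^2=\bar{n}_{\rm mu}$ gives $\E\big[\ex^{\mu_m(1-\ex^{-s})\pi\g d_f^2}\big]=\tau(s)$ with $\tau(s)$ as in \eqref{eq:tau-s}, so by independence the conditional expectation is $\tau(s)^m$. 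If $|\Ac_f|=m\ge1/\g$, simply use $A\le\pi R^2$, which makes the conditional expectation at most $\ex^{\bar{n}_{\rm mu}(1-\ex^{-s})}$ and thus cancels the outer factor $\ex^{\bar{n}_{\rm mu}(\ex^{-s}-1)}$ exactly. Averaging over $|\Ac_f|\sim{\rm Poiss}(\bar{n}_{\rm fap})$ and enlarging the first (partial) sum to the full Poisson probability generating function,
\begin{align}
\Phi_{N_m^{b_m}}(s)\ \le\ &\ \ex^{\bar{n}_{\rm mu}(\ex^{-s}-1)}\sum_{m\ge0}\frac{\ex^{-\bar{n}_{\rm fap}}\bar{n}_{\rm fap}^{m}}{m!}\,\tau(s)^{m}\nonumber\\
&+\ \P\big(|\Ac_f|\ge1/\g\big).\nonumber
\end{align}
The first term equals $\ex^{\bar{n}_{\rm mu}(\ex^{-s}-1)+\bar{n}_{\rm fap}(\tau(s)-1)}$, and for the second I would use the Poisson Chernoff bound $\P(X\ge a)\le\ex^{-\lambda}(\ex\lambda/a)^{a}$, valid for $X\sim{\rm Poiss}(\lambda)$ and $a\ge\lambda$, with $\lambda=\bar{n}_{\rm fap}$ and $a=1/\g$; this gives $\ex^{-\bar{n}_{\rm fap}}(\ex\,\g\bar{n}_{\rm fap})^{1/\g}=\ex^{\g^{-1}-\bar{n}_{\rm fap}+\g^{-1}\log(\g\bar{n}_{\rm fap})}$, which is exactly the second term in the statement.

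The only genuinely non-routine points are (i) the geometric reduction $S_{-f}=\pi R^2-A$ with $A$ the area of a union of disks of area $\pi\g d_f^2$, together with the realization that the union bound on $A$ is worth keeping only while $|\Ac_f|<1/\g$ — precisely this is what forces the two-regime split and so produces the extra additive term; and (ii) evaluating $\tau(s)=\E[\ex^{\mu_m(1-\ex^{-s})\pi\g d_f^2}]$ correctly, which rests on $d_f^2$ being uniform on $[0,R^2]$ and on the identity $\mu_m\pi\g R^2=\g\bar{n}_{\rm mu}$. The remaining ingredients — the Poisson probability generating function and the Chernoff tail bound — are standard; one should only note that the Chernoff step needs $1/\g\ge\bar{n}_{\rm fap}$, i.e.\ $\g\bar{n}_{\rm fap}\le1$, which is the regime of interest (FAP coverage disks not blanketing the macrocell).
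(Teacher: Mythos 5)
Your proposal is correct and follows essentially the same route as the paper: the lower bound via $S_{-f}\le \pi R^2$ (equivalently $N_m^{b_m}\le|\Uc_m|$), and the upper bound via the union bound on the covered area, the evaluation of $\tau(s)$ from the uniform law of $d_f^2$ on $[0,R^2]$, extension of the partial sum to the full Poisson generating function, and the optimized Chernoff bound for $\P(|\Ac_f|\ge \g^{-1})$, which is exactly the paper's split on the event $\Ec=\{|\Ac_f|\le\g^{-1}\}$. The only differences are cosmetic (you condition on $|\Ac_f|=m$ and factor out $\ex^{\bar{n}_{\rm mu}(\ex^{-s}-1)}$ rather than conditioning on $\Ec$), and you are in fact slightly more careful than the paper in flagging that the Chernoff step requires $\g\bar{n}_{\rm fap}\le 1$.
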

%\enlargethispage{1cm}
\begin{proof} Since $N_m^{b_m}\leq  |\Uc_m|$ always holds,  for $s\geq 0$, $\ex^{-sN_m^{b_m}}\geq \ex^{-s|\Uc_m|}$. Therefore, $ \Phi_{N_m^{b_m}}(s)\geq \E[ \ex^{-s|\Uc_m|}]=\ex^{ \bar{n}_{\rm mu}(\ex^{-s}-1)}$. If the  coverage areas of the FAPs do not overlap, then $S_{-f}=\pi(R^2-\g\sum_{a_f\in\Ac_f}d^2(a_f,b_m))$. In general, the regions might overlap, and therefore $S_{-f}\geq  \pi(R^2-\g\sum_{a_f\in\Ac_f}d^2(a_f,b_m))$. This lower bound  clearly is a function of the locations of the FAPs, and can be negative. Let $\Ec$ denote the event that $|\Ac_f| \leq \gamma^{-1}$. However, if $\Ec$ holds, then $\g\sum_{a_f\in\Ac_f}d^2(a_f,b_m) \leq \g R^2 |\Ac_f| \leq R^2,$ and $ \pi(R^2-\g\sum_{a_f\in\Ac_f}d^2(a_f,b_m))\geq 0$. We employ this observation to derive an upper bound on $\Phi_{N_m^{b_m}}(s)$. By the law of total expectation
\begin{align}
\Phi_{N_{m}^{b_m}}(\hspace{-.1em}s\hspace{-.1em}) \hspace{-.4em} &\; = \;  \hspace{-.4em}\E[\ex^{\mu_mS_{-\hspace{-.1em}f}(\ex^{-s}\hspace{-.3em}-1\hspace{-.1em})}|\Ec]\hspace{-.2em}\P\hspace{-.1em}(\hspace{-.1em}\Ec\hspace{-.1em})
\hspace{-.3em}
+\hspace{-.2em}\E[\ex^{\mu_mS_{-\hspace{-.1em}f}(\ex^{-s}\hspace{-.2em}-1)}|\Ec^c]\hspace{-.2em}\P\hspace{-.1em}(\hspace{-.1em}\Ec^c\hspace{-.1em})\nonumber\\
&\; \leq  \;  \E[\ex^{\mu_mS_{-f}(\ex^{-s}-1)}|\Ec]\P(\Ec)+\P(\Ec^c).\label{eq:step-11}
\end{align}
On the other hand,
\begin{align}
\E&[\ex^{\mu_mS_{-f}(\ex^{-s\hspace{-.1em}}-1)}|\Ec] \leq \E[\ex^{\mu_m\pi(R^2-\g\sum\limits_{a_f\in\Ac_f}d^2(a_f,b_m)) (\ex^{-s}\hspace{-.1em}-1)}|\Ec] \nonumber\\
&\;= \;{\ex^{\bar{n}_{\rm mu}(\ex^{-s}-1)} \over P(\Ec)} \sum_{n=0}^{\lfloor \g^{-1} \rfloor} \E[\ex^{-\mu_m\pi \g (\ex^{-s}-1)\sum\limits_{i=1}^nd_i^2 }] \P(|\Ac_f|=n) \nonumber\\
&\; \overset{{\mathrm{(a)}}}{=} \;{\ex^{\bar{n}_{\rm mu}(\ex^{-s}-1)} \over P(\Ec)} \sum_{n=0}^{\lfloor \g^{-1} \rfloor} \tau^n(s) \P(|\Ac_f|=n) \nonumber\\
&\;\leq  \;{\ex^{\bar{n}_{\rm mu}(\ex^{-s}-1)} \over P(\Ec)} \sum_{n=0}^{\infty} \tau^n(s) \P(|\Ac_f|=n) \nonumber\\
&\;={\ex^{\bar{n}_{\rm mu}(\ex^{-s}-1)} \over P(\Ec)} \ex^{\bar{n}_{\rm fap} (\tau(s)-1)},\label{eq:step-12}
\end{align}
where $(a)$ holds because
\begin{align*}
\E[\ex^{(1-\ex^{-s})\mu_m \pi\g d^2({a}_f,b_m)}]&=\int_{0}^R \ex^{(1-\ex^{-s})\mu_m \pi\g r^2}{2r\over R^2}dr\nonumber\\
& = {\ex^{(1-\ex^{-s})\g \bar{n}_{\rm mu}}-1 \over (1-\ex^{-s})\g \bar{n}_{\rm mu}}=\tau(s).
\end{align*}
By the Chernoff bound, for $x>0$, $\P(\Ec^c)=\P(|\Ac_f| > \g^{-1})  \leq  {\E[ \ex^{x|\Ac_f|}]\over \ex^{x/\g}}= {\ex^{\bar{n}_{\rm fap} (\ex^x-1)}\over \ex^{x/\g}}.$
Optimizing the bound by  choosing $x$  as the solution of   $\bar{n}_{\rm fap} \ex^x-\g^{-1}=0$, we obtain
\begin{align}
\P(\Ec^c)   &\;\leq  \; \ex^{\g^{-1}-\bar{n}_{\rm fap} + \g^{-1}\log (\g \bar{n}_{\rm fap})}.\label{eq:step-13}
\end{align}
Combining \eqref{eq:step-11}, \eqref{eq:step-12} and \eqref{eq:step-13} yields the desired result.
\end{proof}

%-------- %-------- %-------- End of Theorem 1 %-------- %-------- %--------

In our outage analysis presented in the proceeding sections, in some cases we study  the case where we know that there exists one FAP $a_f$ at distance $d_f$ from $b_m$. In those cases, it will be useful to define the conditional Laplace transform of $N_m^{b_m}$ as $\Phi_{N_m^{b_m}}(s|d_f)$.

\begin{theorem}\label{thm:2}
For $s\geq 0$ and $d_f\in(0,R)$, we have
$ \Phi_{N_m^{b_m}}(s|d_f) \geq \ex^{ \bar{n}_{\rm mu}(\ex^{-s}-1)},$
and
\begin{align*}
\Phi_{N_m^{b_m}}(s|d_f) \; \leq &\;   \;{\ex^{(\bar{n}_{\rm mu}-\bar{n}_{\rm mu}^f)(\ex^{-s}-1)} \over (1-\ex^{-\bar{n}_{\rm fap}  })\tau(s)} \ex^{\bar{n}_{\rm fap} (\tau(s)-1)}\nonumber\\
&+{\ex^{\g^{-1}-\bar{n}_{\rm fap} + \g^{-1}\log (\g \bar{n}_{\rm fap})} \over 1-\ex^{-\bar{n}_{\rm fap}}},\vspace{-.5em}
\end{align*}
where $\tau(s)$ and $\g$ are defined in \eqref{eq:tau-s}, and \eqref{eq:gamma-def}, respectively,  and as before $\bar{n}_{\rm mu}^f=\pi \g d_f^2 \mu_m$, $\bar{n}_{\rm mu}=\pi R^2 \mu_m$, and $\bar{n}_{\rm fap}=\pi R^2\l_f$.
\end{theorem}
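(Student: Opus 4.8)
The plan is to mirror the proof of Theorem~\ref{thm:1}, but now working with the conditional law of $N_m^{b_m}$ given that one FAP sits at distance $d_f$ from $b_m$. First I would observe that conditioning on one FAP at $d_f$ only removes the circular region of radius $\sqrt{\g}d_f$ around that FAP from the area exclusively served by $b_m$; the remaining FAPs still form a PPP of intensity $\l_f$ over $\Sc_m$ by the independence/stationarity of the Poisson process (Slivnyak-type reasoning). Hence, writing $\Ac_f'$ for the ``other'' FAPs, the area served only by $b_m$ satisfies $S_{-f}\ge \pi\bigl(R^2-\g d_f^2-\g\sum_{a_f\in\Ac_f'}d^2(a_f,b_m)\bigr)$, which decomposes the mean into $(\bar n_{\rm mu}-\bar n_{\rm mu}^f)$ plus the contribution of $\Ac_f'$. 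The lower bound $\Phi_{N_m^{b_m}}(s|d_f)\ge \ex^{\bar n_{\rm mu}(\ex^{-s}-1)}$ is immediate exactly as before, since $N_m^{b_m}\le|\Uc_m|$ regardless of the conditioning.

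For the upper bound, I would reuse the event $\Ec=\{|\Ac_f'|\le \g^{-1}\}$ defined on the remaining FAPs, split by the law of total expectation as in \eqref{eq:step-11}, and on $\Ec$ use $\ex^{-s}-1\le 0$ together with the nonnegativity of $\pi(R^2-\g d_f^2-\g\sum d_i^2)$ to bound $\E[\ex^{\mu_m S_{-f}(\ex^{-s}-1)}\mid\Ec]$. Pulling out the deterministic factor $\ex^{(\bar n_{\rm mu}-\bar n_{\rm mu}^f)(\ex^{-s}-1)}$, the sum over $n=|\Ac_f'|$ of $\tau^n(s)\P(|\Ac_f'|=n)$ is handled by the identical integral computation for $\E[\ex^{(1-\ex^{-s})\mu_m\pi\g d^2(a_f,b_m)}]=\tau(s)$, giving $\ex^{\bar n_{\rm fap}(\tau(s)-1)}$; and $\P(\Ec^c)$ gets the same Chernoff bound \eqref{eq:step-13}. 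The one genuinely new ingredient is the denominator $(1-\ex^{-\bar n_{\rm fap}})\tau(s)$: this arises because when we condition on the existence of one FAP, the relevant normalization is $\P(\Ec\mid \text{one FAP at }d_f)$; the factor $(1-\ex^{-\bar n_{\rm fap}})$ is a crude lower bound on the probability of having at least one FAP (or on the conditioning event's probability), and the $\tau(s)$ in the denominator compensates for the fact that the $n=0$ term is excluded once we know a FAP is present — effectively we are looking at $\sum_{n\ge 1}\tau^{n}(s)\P(\cdot)=\tau(s)\sum_{n\ge 1}\tau^{n-1}(s)\P(\cdot)$ and then bounding the shifted sum by the full exponential series.

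Concretely, the steps in order: (1) restate the lower bound, one line. (2) Decompose $S_{-f}$ conditionally on the fixed FAP and the independent PPP of other FAPs, separating $\bar n_{\rm mu}^f$. (3) Define $\Ec$ on the remaining FAPs, apply total expectation and bound $\P(\Ec^c)$ by \eqref{eq:step-11}-type truncation, replacing $\P(\Ec^c)$ by $\P(\Ec^c)/(1-\ex^{-\bar n_{\rm fap}})$ to account for the conditioning normalization. (4) On $\Ec$, bound the exponential moment, factor out $\ex^{(\bar n_{\rm mu}-\bar n_{\rm mu}^f)(\ex^{-s}-1)}$, evaluate the geometric-in-$\tau(s)$ sum, and extract the extra $\tau(s)$ from excluding $n=0$. (5) Invoke \eqref{eq:step-13} for the Chernoff term and combine. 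The main obstacle I anticipate is getting the normalization bookkeeping exactly right in step (3)-(4): precisely why the conditional probabilities contribute the $(1-\ex^{-\bar n_{\rm fap}})$ factor and the extra $\tau(s)^{-1}$, since the conditioning event (``there exists a FAP at distance $d_f$'') must be handled carefully — one wants to argue that under this conditioning the configuration of the remaining FAPs is still Poisson, so that $\P(|\Ac_f|=n)$ gets replaced appropriately and the $n\ge1$ truncation produces the stated weaker constant. Everything else is a routine adaptation of the computations already carried out in the proof of Theorem~\ref{thm:1}.
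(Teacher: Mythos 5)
Your plan is, in outline, exactly the paper's proof: the one--line lower bound, the decomposition $S_{-f}\ge\pi\bigl(R^2-\g d_f^2-\g\sum_{a'_f\in\Ac_f\backslash a_f}d^2(a'_f,b_m)\bigr)$ isolating $\bar{n}_{\rm mu}^f$, the truncation event $\Ec$, the $\tau(s)$ moment computation, and the Chernoff bound \eqref{eq:step-13} are all reused verbatim from Theorem~\ref{thm:1}, and your steps (3)--(5) reproduce the paper's bookkeeping. The one place where your justification wobbles is precisely the point you flag: the origin of the factor $(1-\ex^{-\bar{n}_{\rm fap}})\tau(s)$. The paper does \emph{not} invoke a Slivnyak/Palm argument; it keeps the total count $|\Ac_f|\sim{\rm Poiss}(\bar{n}_{\rm fap})$ and conditions on $\{|\Ac_f|\ge 1\}$ with one of the uniformly placed FAPs at distance $d_f$. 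Then, given $|\Ac_f|=n$, only the $n-1$ remaining uniform points contribute, giving $\tau^{n-1}(s)$ per configuration, and every probability is divided by $\P(|\Ac_f|\ge 1)=1-\ex^{-\bar{n}_{\rm fap}}$ \emph{exactly} (it is the normalization $\P(|\Ac_f|=n\,|\,|\Ac_f|>0)=\P(|\Ac_f|=n)/(1-\ex^{-\bar{n}_{\rm fap}})$ and likewise for $\P(\Ec^c\,|\,|\Ac_f|\ge1)$, not a ``crude lower bound''). The $\tau(s)^{-1}$ then comes from the exponent shift caused by the fixed FAP: $\sum_{n\ge1}\tau^{n-1}(s)\P(|\Ac_f|=n)\le\sum_{n\ge0}\tau^{n-1}(s)\P(|\Ac_f|=n)=\tau(s)^{-1}\ex^{\bar{n}_{\rm fap}(\tau(s)-1)}$ — not from ``excluding the $n=0$ term,'' which is only added back as a nonnegative quantity. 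By contrast, if you carried your Slivnyak observation through consistently (remaining FAPs an independent PPP, so their number is ${\rm Poiss}(\bar{n}_{\rm fap})$ with no conditioning), you would get the cleaner bound $\ex^{(\bar{n}_{\rm mu}-\bar{n}_{\rm mu}^f)(\ex^{-s}-1)+\bar{n}_{\rm fap}(\tau(s)-1)}+\P(\Ec^c)$ with no denominators at all; that is a perfectly sensible bound under that formalization, but it is not the stated one and does not in general imply it (e.g.\ $(1-\ex^{-\bar{n}_{\rm fap}})\tau(s)$ can exceed $1$), so to prove the theorem as written you must commit to the paper's conditioning framing — which your concrete step list does, even though your narrative explanation mixes the two.
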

\vspace{0.5em}
\begin{proof}
The proof closely follows the steps of the proof of Theorem \ref{thm:1}. The only  difference is that here we know that $|\Ac_f|\geq 1$ and that one FAP is located at distance $d_f$ from $b_m$. Therefore, $S_{-f}=\pi(R^2-\g d_f^2-\g\sum_{a'_f\in\Ac_f\backslash a_f}d^2(a'_f,b_m))$.  Define event $\Ec$ as before. Then, again by the law of total expectation, \vspace{-.5em}
\begin{align*}
\Phi_{N_m^{b_m}}(s|d_f)
 & \leq  \E[\ex^{\mu_mS_{-f}(\ex^{-s}-1)}|\Ec, |\Ac_f|>0]\P(\Ec||\Ac_f|>0)\nonumber\\
 &\;\;+\P(\Ec^c||\Ac_f|>0).
\end{align*}
Note that
\begin{align}
\E&[\ex^{\mu_mS_{-f}(\ex^{-s}-1)}|\Ec,d_f]\nonumber\\
 &\;\leq\;  \hspace{-.2em}\E[\ex^{\mu_m\pi(R^2-\g d^2_f- \sum_{a'_f\in\Ac_f\backslash a_f}d^2(a'_f,b_m)) (\ex^{-s}-1)}|\Ec,d_f] \nonumber\\
&\;= \;{\ex^{(\bar{n}_{\rm mu}-\bar{n}_{\rm mu}^f)(\ex^{-s}-1)} \over P(\Ec|  |\Ac_f|>0)} \sum_{n=1}^{\lfloor \g^{-1} \rfloor} \E[\ex^{-\mu_m\pi \g (\ex^{-s}-1)\sum\limits_{i=1}^{n-1}d_i^2 }]\nonumber\\
&\hspace{4.5cm}. \P(|\Ac_f|=n| |\Ac_f|>0) \nonumber\\
&\; =\hspace{-.3em} \;{\ex^{(\bar{n}_{\rm mu}-\bar{n}_{\rm mu}^f)(\ex^{-s}-1)} \over P(\Ec| |\Ac_f|>0)(1-\ex^{-\bar{n}_{\rm fap}  })}\hspace{-.4em} \sum_{n=1}^{\lfloor \g^{-1} \rfloor} \tau^{n-1}(s) \P(|\Ac_f|=n) \nonumber\\
&\;\leq  \;{\ex^{(\bar{n}_{\rm mu}-\bar{n}_{\rm mu}^f)(\ex^{-s}-1)} \over P(\Ec| |\Ac_f|>0)(1-\ex^{-\bar{n}_{\rm fap} })} \sum_{n=0}^{\infty} \tau^{n-1}(s) \P(|\Ac_f|=n) \nonumber\\
&\;\leq  \;{\ex^{(\bar{n}_{\rm mu}-\bar{n}_{\rm mu}^f)(\ex^{-s}-1)} \over P(\Ec| |\Ac_f|>0)(1-\ex^{-\bar{n}_{\rm fap} })\tau(s)} \ex^{\bar{n}_{\rm fap} (\tau(s)-1)}.
\end{align}
Furthermore, $\P(\Ec^c||\Ac_f|\geq 1) \;  = \;{\P(\Ec^c)\over \P(|\Ac_f|\geq 1)} \leq {\ex^{\g^{-1}-\bar{n}_{\rm fap} + \g^{-1}\log (\g \bar{n}_{\rm fap})} \over 1-\ex^{-\bar{n}_{\rm fap}}}$,
where the last step follows from \eqref{eq:step-13}.
\end{proof}

%----------------------%----------------------%----------------------
%----------------------%----------------------%----------------------
\section{Outage analysis} \label{sec:outage}
In this section we analyze the outage performance of MUs and FUs in the uplink network  described in Section \ref{sec:model}.  We assume that every user equipment employs power control to compensate for the effect of path loss. By power control, MUs serviced by the MBS intend to achieve received power levels of $P_m$. Similarly, FUs and MUs serviced by  FAPs adjust their transmitted power to achieve received power of  $P_f$.   We further assume that  the performance of   the users is primarily limited by the interference caused by other users of both tiers. Therefore we ignore the effect of additive Gaussian noise in our analysis.

To bound the outage probability, in each case, we first compute the signal to interference ratio (SIR) experienced by user equipments. The derived SIRs are probabilistic and depend on channel coefficients, and users locations. Then, we bound  the outage probabilities by employing results we proved in the previous section.
\subsection{MU served by a FAP}
Consider FAPs $a_f$ and   $\hat{a}_f\in\Ac_f\backslash a_f $. The distance between user $u$ that is covered by $\hat{a}_f$ is usually much smaller than the distance between $u$ and $a_f$. That is, $d(u,\hat{a}_f)\ll d(u,a_f)$, or ${d(u,\hat{a}_f)\over d(u,a_f)}\ll 1$.  Therefore, in evaluating the performance of users covered by $a_f$,  unless the density of FAPs ($\lambda_f$) is very large, the term  corresponding to the interference caused by users (macro or femto) covered by other FAPs is negligible compared to the other terms.  Making this approximation, the upload SIR experienced  by user $u_m\in\Uc_m(a_f)$ in subband $i\in \{1,2,\ldots, n_s\} $ is equal to
\begin{align}
\SIR_{m,f}&= {\frac{P_f|h^i_{u_m,a_f}|^2}{n_s} \over I_{m,f}},\label{eq:SIR-m-fap}
\end{align}
where
\begin{align}
I_{m,f}=&  \sum\limits_{u_f\in \Uc_f(a_f)} \hspace{-.5em}\frac{P_f|h^i_{u_f,a_f}|^2}{G}+\hspace{-1em}\sum\limits_{\uh_m\in \Uc_m(a_f)\backslash u_m}\hspace{-.5em}\frac{P_f|h^i_{\uh_m,a_f}|^2}{G}\nonumber\\
&+\hspace{-1em}\sum\limits_{\uh_m\in \Uc_m^{(-f)}}\hspace{-.3em}\Big({d(\uh_m,b_m)\over d(\uh_m,a_f)}\Big)^{\a}\frac{P_m|h^i_{\uh_m,a_f}|^2}{G}. \label{eq:I_macro}
\end{align}
In \eqref{eq:I_macro}, the interference terms are caused by the FUs of FAP $a_f$, the other  MUs of FAP $a_f$, and the MUs serviced by the MBS, respectively.   In our model, from the perspective of the outage performance of MUs served by a FAP, there is no difference between MUs and FUs covered by that FAP. Therefore, statistically,  \eqref{eq:SIR-m-fap} and \eqref{eq:I_macro} also describe the performance experienced by FUs of $a_f$.

\begin{remark}
While it might seem that we have assumed the same attenuation factor $\alpha$ for all different links in the network, in fact, the results do not change in general where  the path-loss exponent of outdoor and cross-wall (outdoor-indoor) transmissions  are assumed to be equal ($\alpha$) and larger than the path-loss exponent of indoor transmissions ($\beta$). To observe this, note that in \eqref{eq:SIR-m-fap}, exponent $\beta$ only affects users in $\Uc_f(a_f)$. However, due to our power control assumption, terms like $d^{\beta}(u_f,a_f)$ do not appear in the interference $I_{m,f}$. The same is true for our analysis presented in the next section corresponding to MUs served by the MBS.
\end{remark}

Consider  FAP $a_f\in\Ac_f$ positioned at distance $d_f=d(a_f,b_m)$ from MBS $b_m$. In the rest of this section,  we derive  upper  and lower bounds on the outage probability of MU  $u_m\in\Uc_m(a_f)$ as a function of $d_f$. As just mentioned, the same bounds hold for FUs covered by $a_f$, as well.

For MU $\uh_m\in\Uc_m^{(-f)}$,  since  $\uh_m$ is directly  serviced by MBS $b_m$, instead of one of the FAPs such as $a_f$, we must have $\kappa d(\uh_m,b_m)\leq d_{\uh_m}^{(f)} \leq d(\uh_m,a_f).$ Therefore, ${d(\uh_m,b_m) \over d(\uh_m,a_f)} \leq {1\over \kappa}.$

Let $\delta_{\uh_m}\triangleq {d(\uh_m,b_m) \over d(\uh_m,a_f)}$, where, as a reminder,  $b_m$ and $a_f$ denote the base station and the FAP at distance $d_f$ from $b_m$, respectively. As we just argued, $\delta_{\uh_m}\leq \kappa^{-1}$, for all $\uh_m\in\Uc_m^{(-f)}$. Given the complicated distribution of $\delta_{\uh_m}$, in order to characterize the outage probability, we quantize $\delta_{\uh_m}$.

\begin{figure}[t]
\begin{center}
\includegraphics[width=7cm]{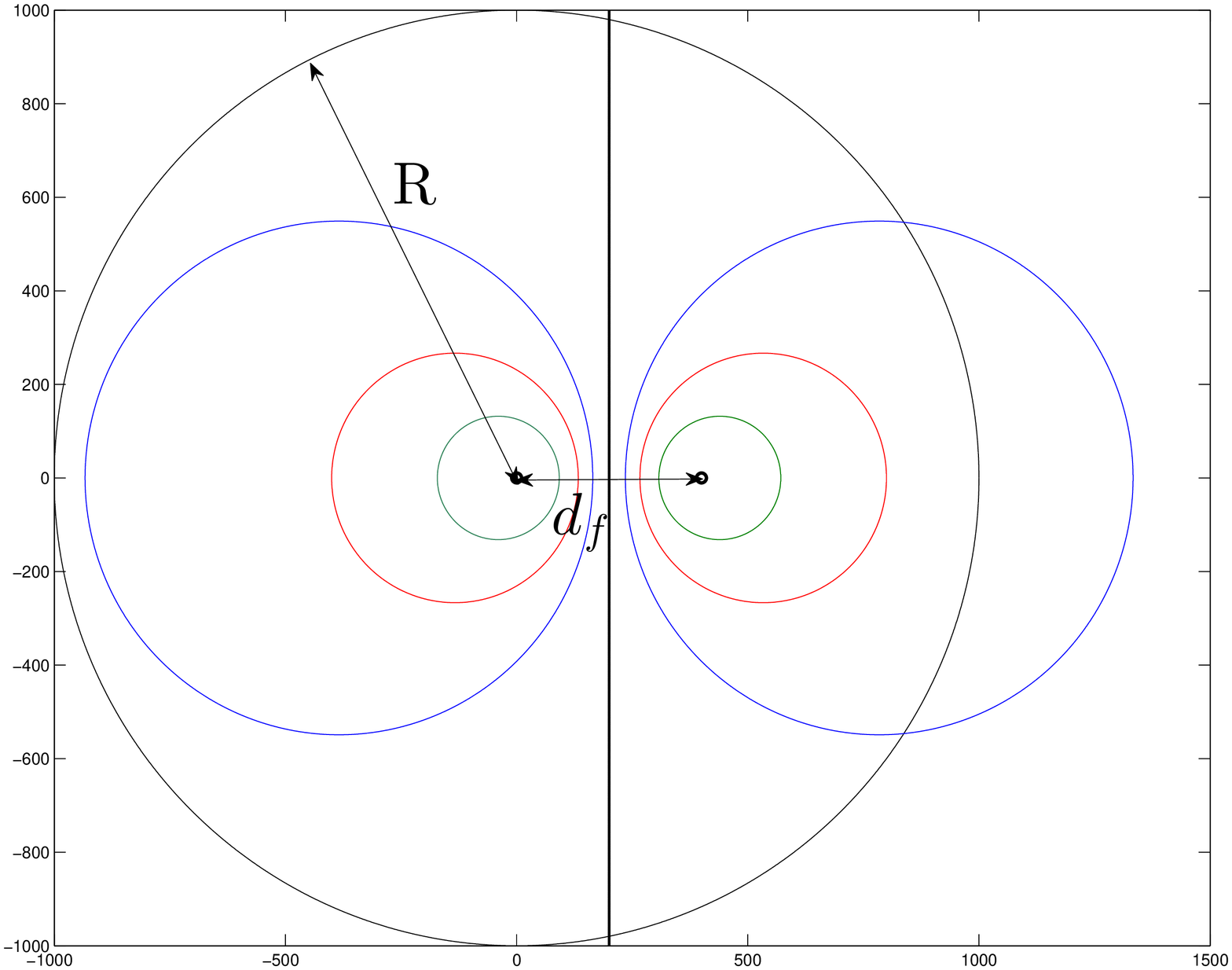}\caption{Partitioning the coverage area}\label{fig:kappa-quant}\end{center}
\end{figure}

Consider the setup shown in Fig.~\ref{fig:kappa-quant}, where FAP $a_f$ is located at distance $d_f$ from $b_m$. The coverage area of $b_m$ is shown by the black circle of radius $R$ centered at $b_m$. The green circle on the right represents  points with $\delta_{\uh_m}=\kappa^{-1}$. Similarly, the points on the green circle on the left have $\delta_{\uh_m}=\kappa$. The other pairs of circles correspond to some other values of $\kappa'>\kappa$. Points on the black line have $\delta_{\uh_m}=1$. Note that, by our assumption, all MUs are located inside the black circle. Therefore, the parts of colored circles that are outside of the black circle have zero probability. Consider $\kappa_0\triangleq\kappa<\kappa_1<\ldots<\kappa_{t-1}<\kappa_{t}\triangleq 1$. Let
\begin{align}
\hat{\delta}^{u}_{\uh_m}=
\left\{
\begin{array}{ccl}
\kappa_i^{-1},  &   &  {\rm if \;\;}\kappa_{i+1}^{-1}<{\delta}_{\uh_m}\leq \kappa_{i}^{-1}, \\
%1  &   & {\rm if \;\;}\kappa_t<{\delta}_{\uh_m}\leq 1,  \\
\kappa_{i+1}  &   &  {\rm if \;\;} \kappa_{i}<{\delta}_{\uh_m}\leq \kappa_{i+1},\\
\kappa  &   &  {\rm if \;\;} {\delta}_{\uh_m}\leq \kappa,
\end{array}
\right.\label{eq:delta-u}
\end{align}
and
\begin{align}
\hat{\delta}^{l}_{\uh_m}=
\left\{
\begin{array}{ccl}
\kappa_{i+1}^{-1},  &   &  {\rm if \;\;}\kappa_{i+1}^{-1}<{\delta}_{\uh_m}\leq \kappa_{i}^{-1}, \\
%1  &   & {\rm if \;\;}\kappa_t<{\delta}_{\uh_m}\leq 1,  \\
\kappa_{i}  &   &  {\rm if \;\;} \kappa_{i}<{\delta}_{\uh_m}\leq \kappa_{i+1},\\
0  &   &  {\rm if \;\;} {\delta}_{\uh_m}\leq \kappa.\label{eq:delta-l}
\end{array}
\right.
\end{align}
Then by construction,  $\hat{\delta}^{l}_{\uh_m}\leq {\delta}_{\uh_m}\leq \hat{\delta}^{u}_{\uh_m}$, for all $\uh_m\in\Uc_m^{(-f)}$, and,  unlike ${\delta}_{\uh_m}$, $\hat{\delta}^u_{\uh_m}$ and $\hat{\delta}^l_{\uh_m}$ are finite-alphabet random variables. Let $\Sc_i$ and $s_i$, $i=1,\ldots,t$, denote the region  corresponding to  $(\hat{\delta}^u_{\uh_m},\hat{\delta}^l_{\uh_m})=(\kappa_{i-1}^{-1},\kappa_{i}^{-1})$, and its area, respectively. Similarly, define $\Sc_{i}$  and $s_{i}$, $i=-t,\ldots,-1$ to correspond  to the region with $(\hat{\delta}_{\uh_m}^u,\hat{\delta}^l_{\uh_m})=(\kappa_{-i},\kappa_{-i-1})$. Finally, $\Sc_{0}$  and $s_{0}$ correspond to $(\hat{\delta}^u_{\uh_m},\hat{\delta}^l_{\uh_m})=(\kappa,0)$. In Appendix A, given $t\in\mathds{N}^+$,  we present analytic expressions for computing   $s_{-t},s_{-t+1},\ldots,s_{t-1},s_t$.

%-------%-------%-------%-------%-------%-------%-------
\begin{lemma}\label{lemma:delta-hat-dist}
For $i=1,\ldots,t$, $p_i\triangleq \P(\hat{\delta}^u_{\uh_m}=\kappa_{i-1}^{-1})=\P(\hat{\delta}^l_{\uh_m}=\kappa_{i}^{-1})={s_i \over s}$, for $i=-t,\ldots,-1$, $p_{i}\triangleq\P(\hat{\delta}^u_{\uh_m}=\kappa_{-i})=\P(\hat{\delta}^l_{\uh_m}=\kappa_{-i-1})={s_{i} \over s}$,
and $p_0\triangleq\P(\hat{\delta}^u_{\uh_m}=\kappa)=\P(\hat{\delta}^l_{\uh_m}=0)={s_0\over s}$,
where  $s\triangleq  s_0+\sum_{i=1}^t(s_i+s_{-i})$.
\end{lemma}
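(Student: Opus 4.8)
The plan is to reduce the lemma to an elementary area computation. The key input is the remark recorded just before the statement: a MU served by $b_m$ necessarily has $\delta_{\uh_m}\le\kappa^{-1}$, so by the Apollonius-circle computation preceding \eqref{eq:gamma-def} such a MU must lie in $D\triangleq\Sc_m\setminus B$, where $B$ is the coverage disk of $a_f$, that is, the disk of radius $\sqrt{\gamma}\,d_f$ centered at $(d_f/(1-\kappa^2),0)$. The first step is to argue that, conditioned on a FAP $a_f$ being at distance $d_f$ from $b_m$ and on $\uh_m\in\Uc_m^{(-f)}$, the position of $\uh_m$ is uniformly distributed over $D$. Since $\uh_m$ is a point of the PPP $\Uc_m$, which (conditioned on the number of MUs) places them independently and uniformly over $\Sc_m$, the tagged point $\uh_m$ is uniform over $\Sc_m$; once the coverage regions of the FAPs other than $a_f$ are neglected --- the reference-cell approximation already adopted at the start of this subsection --- ``served by $b_m$'' is equivalent to $\uh_m\in D$, and hence the conditional law of $\uh_m$ is uniform on $D$, whose area I call $s$.

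The second step records that the regions $\Sc_{-t},\dots,\Sc_0,\dots,\Sc_t$ partition $D$. By their definition, writing $\delta$ for $\delta_{\uh_m}$, one has $\Sc_i=\{\kappa_i^{-1}<\delta\le\kappa_{i-1}^{-1}\}\cap\Sc_m$ for $i\ge1$, $\Sc_i=\{\kappa_{-i-1}<\delta\le\kappa_{-i}\}\cap\Sc_m$ for $i\le-1$, and $\Sc_0=\{\delta\le\kappa\}\cap\Sc_m$. Because $\kappa=\kappa_0<\kappa_1<\dots<\kappa_t=1$, the half-open intervals $(\kappa_i^{-1},\kappa_{i-1}^{-1}]$ ($i=1,\dots,t$), $(\kappa_{-i-1},\kappa_{-i}]$ ($i=-1,\dots,-t$) and $(0,\kappa]$ tile $(0,\kappa^{-1}]$, so the $\Sc_i$ tile $D=\{\delta\le\kappa^{-1}\}\cap\Sc_m$; hence $s=|D|=s_0+\sum_{i=1}^t(s_i+s_{-i})$, which matches the definition of $s$ in the statement.

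The third step matches the events to these regions by unwinding the three-branch quantizers \eqref{eq:delta-u} and \eqref{eq:delta-l}. For $i\ge1$, \eqref{eq:delta-u} gives $\hat\delta^u_{\uh_m}=\kappa_{i-1}^{-1}$ precisely when $\kappa_i^{-1}<\delta_{\uh_m}\le\kappa_{i-1}^{-1}$, and on that same set \eqref{eq:delta-l} gives $\hat\delta^l_{\uh_m}=\kappa_i^{-1}$; thus $\{\hat\delta^u_{\uh_m}=\kappa_{i-1}^{-1}\}=\{\hat\delta^l_{\uh_m}=\kappa_i^{-1}\}=\{\uh_m\in\Sc_i\}$, which is exactly why the two probabilities in the lemma coincide. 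The identical reading gives $\{\hat\delta^u_{\uh_m}=\kappa_{-i}\}=\{\hat\delta^l_{\uh_m}=\kappa_{-i-1}\}=\{\uh_m\in\Sc_i\}$ for $i\le-1$ and $\{\hat\delta^u_{\uh_m}=\kappa\}=\{\hat\delta^l_{\uh_m}=0\}=\{\uh_m\in\Sc_0\}$. Combining with the uniformity from the first step, $p_i=\P(\uh_m\in\Sc_i)=s_i/s$ for every $i$, which is the claim.

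The one genuinely delicate point is the uniformity asserted in the first step. Strictly, conditioning on $\uh_m\in\Uc_m^{(-f)}$ also forbids $\uh_m$ from lying in the coverage region of any \emph{other} FAP, and since those regions grow with the distance from $b_m$, the exact (Palm) law of $\uh_m$ is tilted toward $b_m$ rather than uniform on $D$. That tilt is precisely the other-FAP contribution that is declared negligible at the start of this subsection --- the not-too-dense-$\lambda_f$ regime in which the whole analysis is carried out --- so I would state the reliance on it explicitly and then proceed. Everything after that is bookkeeping with the piecewise definitions, where the only hazard is keeping straight the index shift between $\hat\delta^u$, $\hat\delta^l$, and the labelling of the $\Sc_i$.
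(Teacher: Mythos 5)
Your steps 2 and 3 (the tiling of the region $\{\delta_{\uh_m}\le\kappa^{-1}\}$ by $\Sc_{-t},\dots,\Sc_t$ and the unwinding of the quantizers \eqref{eq:delta-u}--\eqref{eq:delta-l}) match the paper's setup and are fine. The divergence is in your step 1, and it affects what is actually proved. The paper does \emph{not} drop the other FAPs' coverage regions here: it conditions on $\Ac_f$, uses that given the FAP locations the MBS-served user is uniform on the uncovered part of $\Sc$, writes $\P(\uh_m\in\Sc_i\mid\Ac_f)=(s_i-S_{i,m})/(s-S_m)$, where $S_m$ is the area of $\Sc$ removed by FAP coverage and $S_{i,m}$ is the part of it inside $\Sc_i$, and then eliminates the randomness with a second tower-property step, asserting $\E[S_{i,m}\mid S_m]=(s_i/s)S_m$, which makes the ratio collapse to $s_i/s$ exactly. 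So the lemma is presented and proved as an exact identity, whereas your argument yields it only modulo the ``neglect the other femtocells'' approximation that you yourself flag: the tilt toward $b_m$ you describe is precisely what the paper's proportionality step is meant to cancel. Relative to the statement as an equality, that cancellation is the ingredient missing from your proof.

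Two smaller points. First, the approximation you cite from ``the start of this subsection'' concerns interference terms (users of other FAPs being much closer to their own FAP than to $a_f$), not the spatial law of the tagged MBS-served user, so it does not literally license your uniformity claim; if you keep your route you must introduce the neglect of other FAPs' coverage areas as a separate, explicit assumption. Second, your observation that the exact conditional law is tilted is in fact a fair criticism of the paper's own argument: the step $\E[S_{i,m}\mid S_m]=(s_i/s)S_m$ is asserted without justification, and since the coverage disks of other FAPs are larger and displaced outward, the covered area is not obviously spread over the $\Sc_i$'s in proportion to their sizes. Still, if the goal is to reproduce the paper's proof of the lemma as stated, the conditioning on $\Ac_f$ followed by the tower-property cancellation is the device you need.
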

%-------%-------%-------%-------%-------%-------%-------
\vspace{0.2cm}
\begin{proof}
Let $\Sc$ denotes the whole circuit of radius $R$ minus the coverage area of $a_f$. Hence the area of $\Sc$  is equal to $s= s_0+\sum_{i=1}^t(s_i+s_{-i})$. For $i=-t,\ldots,-1$, we have
\begin{align*}
\P(\hat{\delta}^u_{\uh_m}\hspace{-.3em}=\hspace{-.2em}\kappa_{-i}\hspace{-.em})\hspace{-.1em}\hspace{-.2em}=\E[\hspace{-.1em}\ind_{\uh_m\in \Sc_{i}}\hspace{-.1em}]\hspace{-.2em}=\E\hspace{-.2em}\left[\E[\ind_{\uh_m\in \Sc_{i}}|\Ac_f]\right]\hspace{-.2em}=\hspace{-.2em}\E\hspace{-.2em}\left[\hspace{-.2em}
{ s_{i}-S_{i,m}\over s- S_{m}}\hspace{-.2em}\right]\hspace{-.2em},
\end{align*}
where $\Sc_{m}$ and $S_{m}$  denote the region in $\Sc$ that is covered by the MBS $b_m$,  and its area, respectively. (This is of course the area that is  not covered by FAPs.) Also, $S_{i,m}$ denotes the area of $\Sc_{i}\cap \Sc_{m}$. Note that  $S_{i,m}$ and  $S_{m}$ are both random variables that depend on the locations of the FAPs. To  derive the desired result, we employ the tower property one more time:
\begin{align*}
\P(\hat{\delta}_{\uh_m}=\kappa_i)&=\E\left[\E\Big[{ s_{i}-S_{i,m}\over s- S_{m}}|S_{m}\Big]\right]=\E\left[{ s_{i}-(s_{i}/s)S_{m}\over s- S_{m}}\right]\nonumber\\
&={s_{i}\over s}.
\end{align*}
The proof of the rest of the theorem follows from the same argument.
\end{proof}
%-------%-------%-------%-------%-------%-------%-------
\vspace{0.2cm}

 Let $P_{\rm out}^{m,f}(d_f)$ denote the outage probability experienced by a MU covered by a FAP located at distance $d_f$ of $b_m$. We employ Lemma \ref{lemma:delta-hat-dist} and our upper-bounding and lower-bounding quantizations  of $\delta_{\uh_m}$  to derive the following theorem that presents both an upper bound and a lower bound on  $P_{\rm out}^{m,f}(d_f)$.

\begin{theorem}\label{thm:3}
 Let $T_h\triangleq {T\over n_h}$. For $t\in\mathds{N}^+$, define
\begin{align}
q_l(s)\triangleq {p_0\over 1+s\kappa^{\alpha}   /\eta}+\sum_{i=1}^{t}\Big({p_i \over 1+s /(\eta\kappa_{i-1}^{\alpha})} +{p_{-i} \over 1+s\kappa_{i}^{\alpha}   /\eta}\Big),\label{eq:q-u}
\end{align}
\begin{align}
q_u(s)\triangleq p_0+\sum_{i=1}^{t}\Big({p_i \over 1+s  /(\eta\kappa_{i}^{\alpha})} +{p_{-i} \over 1+s\kappa_{i-1}^{\alpha}  /\eta}\Big),\label{eq:q-l}
\end{align}
where  $(p_{-t},\ldots,p_t)$ are defined and characterized   in Lemma \ref{lemma:delta-hat-dist}, $\eta\triangleq{P_f/ P_m}$, and  $\tau_o=\tau(-\log q_l({T_h\over \sigma^2}))$, with $\tau(\cdot)$ defined in \eqref{eq:tau-s}. Then,
\begin{align*}
P_{\rm out}^{m,f}(d_f) \leq 1&-{(1+T_h)(\ex^{\bar{n}_{\rm mu}^f/(1+T_h)}-1)\over \ex^{\bar{n}_{\rm mu}^f}-1}\nonumber\\
&\;\;\;\;\cdot \ex^{-\bar{n}_{\rm fu}T_h/(1+T_h)-\bar{n}_{\rm mu} (1-q_u(T_h/\sigma^2))}
\end{align*}
and
\begin{align*}
&P_{\rm out}^{m,f}(d_f)  \geq  1-{(1+T_h)(\ex^{\bar{n}_{\rm mu}^f/(1+T_h)}-1)\over \ex^{\bar{n}_{\rm mu}^f}-1}\ex^{-\bar{n}_{\rm fu}T_h/(1+T_h)} \nonumber\\
&\cdot \hspace{-.2em}\Big(\hspace{-.1em}{\ex^{(\hspace{-.1em}\bar{n}_{\rm mu}-\bar{n}_{\rm mu}^f)(q_l\hspace{-.1em}({T_h\over \sigma^2})-1)+\bar{n}_{\rm fap}\hspace{-.1em}
(\tau_o\hspace{-.1em}-1)} \over (1-\ex^{-\bar{n}_{\rm fap}} )\tau_o} \hspace{-.2em} +\hspace{-.2em}{\ex^{\g^{-1}\hspace{-.1em}-\bar{n}_{\rm fap} \hspace{-.1em}+ \g^{-1}\hspace{-.1em}\log (\hspace{-.1em}\g \bar{n}_{\rm fap}\hspace{-.1em})} \over 1-\ex^{-\bar{n}_{\rm fap}}}\hspace{-.2em}\Big).
\end{align*}
\end{theorem}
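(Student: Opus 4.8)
The plan is to convert the outage event into a Laplace transform of the aggregate interference and then to reuse the distributional results of Section~\ref{sec:dist}. Fix the FAP $a_f$ at distance $d_f$ from $b_m$. Outage for $u_m\in\Uc_m(a_f)$ means $\SIR_{m,f}<T$ for the SIR threshold ($T_h=T/n_h$); writing out \eqref{eq:SIR-m-fap}--\eqref{eq:I_macro} and conditioning on everything except the desired-link coefficient $h^i_{u_m,a_f}$, which is exponential and independent of the interference, the outage condition becomes $|h^i_{u_m,a_f}|^2<(T_h/P_f)\,J$, where $J$ is the un-normalized aggregate interference power. By the exponential tail of $|h^i_{u_m,a_f}|^2$,
\[
1-P_{\rm out}^{m,f}(d_f)=\E\big[\ex^{-(T_h/(P_f\sigma^2))\,J}\big].
\]

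The remaining fading coefficients are mutually independent and independent of the point processes, so I would integrate them out first. Each FU of $a_f$ and each co-FAP MU of $a_f$ contributes the factor $(1+T_h)^{-1}$, while each MBS-served MU $\uh_m$ contributes $g(\delta_{\uh_m}):=(1+\eta^{-1}T_h\,\delta_{\uh_m}^{\alpha})^{-1}$ with $\eta=P_f/P_m$ and $\delta_{\uh_m}=d(\uh_m,b_m)/d(\uh_m,a_f)$. Conditioned on the FAP locations the three interference sources live on disjoint regions (resp.\ on independent PPPs), hence
\[
1-P_{\rm out}^{m,f}(d_f)=\E\big[(1+T_h)^{-N_f^{a_f}}\big]\,\E\big[(1+T_h)^{-(N_m^{a_f}-1)}\mid N_m^{a_f}\ge1\big]\,\E\Big[{\textstyle\prod_{\uh_m\in\Uc_m^{(-f)}}}g(\delta_{\uh_m})\Big].
\]
Substituting $s=\log(1+T_h)$ into \eqref{eq:Phi-Uf-af} and \eqref{eq:Phi-Um-af-plus} evaluates the first two factors exactly and produces the common prefactor $\frac{(1+T_h)(\ex^{\bar n_{\rm mu}^f/(1+T_h)}-1)}{\ex^{\bar n_{\rm mu}^f}-1}\,\ex^{-\bar n_{\rm fu}T_h/(1+T_h)}$ of both displayed bounds.

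The third factor carries the geometry. Here I would quantize $\delta_{\uh_m}$ as in \eqref{eq:delta-u}--\eqref{eq:delta-l}: since $g$ is decreasing, $g(\hat\delta^u_{\uh_m})\le g(\delta_{\uh_m})\le g(\hat\delta^l_{\uh_m})$, so over-quantizing $\delta$ yields the upper bound on $P_{\rm out}^{m,f}(d_f)$ and under-quantizing yields the lower bound. After quantization each interferer contributes one of finitely many values $g(\kappa),g(\kappa_j),g(\kappa_j^{-1})$; conditioned on the FAP locations $\Uc_m^{(-f)}$ is a rate-$\mu_m$ Poisson process on the ``MBS-only'' region, so by the probability generating functional of a PPP the quantized product equals $\exp\!\big(-\mu_m\sum_i|\Sc_i\cap(\text{MBS-only region})|\,(1-g(\hat\delta_i))\big)$, and replacing the cell areas by the probabilities of Lemma~\ref{lemma:delta-hat-dist} turns the bracketed sum into one of $q_l(\cdot)$, $q_u(\cdot)$ of \eqref{eq:q-u}--\eqref{eq:q-l}. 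For the upper bound on $P_{\rm out}^{m,f}(d_f)$ this is direct: enlarging $\Uc_m^{(-f)}$ to the set of all MUs lying outside the coverage disc of $a_f$ (adding interferers only lowers $1-P_{\rm out}^{m,f}$ and removes the dependence on the other FAPs) makes the relevant point set a plain rate-$\mu_m$ PPP, and after over-quantizing and bounding the annulus area by $\pi R^2$ one obtains $\E[\prod g(\delta_{\uh_m})]\ge\ex^{-\bar n_{\rm mu}(1-q(\cdot))}$, the form appearing in the theorem. For the lower bound one cannot enlarge the interferer set, so instead, after under-quantizing, I would recognize the resulting expectation as bounded above by $\Phi_{N_m^{b_m}}(-\log q(\cdot)\mid d_f)=\E[\,q(\cdot)^{N_m^{b_m}}\mid d_f\,]$ for the appropriate quantized average $q(\cdot)$, and then invoke the upper bound of Theorem~\ref{thm:2} with Laplace argument $s=-\log q(\cdot)$; this is exactly what generates $\tau_o=\tau(-\log q(\cdot))$, the $\bar n_{\rm fap}$ terms, the factor $(1-\ex^{-\bar n_{\rm fap}})\tau_o$ in the denominator, and the residual term $\ex^{\g^{-1}-\bar n_{\rm fap}+\g^{-1}\log(\g\bar n_{\rm fap})}/(1-\ex^{-\bar n_{\rm fap}})$. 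Multiplying the three factors of each bound gives the two displayed inequalities.

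The hard part is the same one Theorems~\ref{thm:1}--\ref{thm:2} had to handle: the coverage discs of different FAPs may overlap, so the ``MBS-only'' area --- and hence the area of each quantization cell intersected with it --- is a random quantity with no clean closed form, and a naive lower bound on it can even be negative. Controlling it in the lower-bound direction forces the split on the event $\{|\Ac_f|\le\g^{-1}\}$ together with the Chernoff estimate \eqref{eq:step-13} on its complement, precisely as in the proof of Theorem~\ref{thm:2}; reducing the MBS-served interference to a Laplace transform of $N_m^{b_m}$ is exactly what lets that estimate be imported verbatim. Secondary care is needed to justify the conditional independence of the three interference sources, to justify the ``i.i.d.\ uniform given the count'' property used in the generating-functional step, and to check that the quantization cells partition the annulus $\Sc$ so that Lemma~\ref{lemma:delta-hat-dist} applies as stated.
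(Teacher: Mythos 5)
Your proposal is correct and follows essentially the same route as the paper's proof in Appendix~\ref{proof-th3}: convert the outage event via the exponential tail of the desired-link coefficient into a Laplace transform of the (quantized) interference using \eqref{eq:delta-u}--\eqref{eq:delta-l}, factorize over the conditionally independent contributions of $N_f^{a_f}$, $N_m^{a_f}$ and $N_m^{b_m}$ using \eqref{eq:Phi-Uf-af} and \eqref{eq:Phi-Um-af-plus}, and import Theorem~\ref{thm:2} (with its event split on $\{|\Ac_f|\le\g^{-1}\}$ and the Chernoff estimate) for the MBS-served term. Your upper-bound leg --- PGFL over the enlarged set of all MUs outside the coverage disc of $a_f$ followed by bounding the area by $\pi R^2$ --- is just an equivalent restatement of the paper's lower bound $\Phi_{N_m^{b_m}}(s|d_f)\ge \ex^{\bar{n}_{\rm mu}(\ex^{-s}-1)}$ obtained from $N_m^{b_m}\le|\Uc_m|$, so nothing essential differs.
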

\begin{proof}
The details of the proof is presented  in Appendix~\ref{proof-th3}, but the outline of the proof is as follows. First,  we derive upper and lower bounds on the upload SIR experienced  by macro user $u_m\in\Uc_m(a_f)$, namely $\SIR_{m,f}$. To achieve this goal, we employ  the quantizations of $\delta_{\uh_m}$ defined in \eqref{eq:delta-u} and \eqref{eq:delta-l}. Then, we connect the outage probability with the Laplace transform of the bounds on SIR.  Finally, we use the fact that the support sets of the locations of MUs served by MBS and FAPs do not overlap to prove that  independence of the number of interfering users in different groups.
\end{proof}

\begin{remark}
In Theorem \ref{thm:3}, $t$ corresponds to the number of  partition levels of the MBS's coverage area, and is a parameter that can be selected arbitrarily. In other words, the bounds hold for any $t\in\mathds{N}^+$, but choosing higher value of $t$ leads to tighter upper and lower bounds.
\end{remark}
%

%-------%-------%-------%-------%-------%-------%-------

\begin{remark}\label{remark:3}
In Theorem \ref{thm:3}, the terms in the bounds that depend on  distance $d_f$ are $\bar{n}_{\rm mu}^f$, $q_u(T_h/\sigma^2)$ and $q_l(T_h/\sigma^2)$. As we will see in the numerical results presented in Section \ref{sec:numerical}, both the upper and lower bounds are not monotonic in $d_f$. This follows from the non-monotonic behavior of $q_u(T_h/\sigma^2)$ and $q_l(T_h/\sigma^2)$. The term $(1+T_h)(\ex^{\bar{n}_{\rm mu}^f/(1+T_h)}-1)/( \ex^{\bar{n}_{\rm mu}^f}-1)$, which appears in both upper and lower bound, is usually very close to one, and in monotonically decreasing in $d_f$. Therefore, the other terms in each bound are the dominant terms.
\end{remark}

\begin{remark}
To gain more insight on the effect of different parameters on the bounds in Theorem \ref{thm:3}, we can consider their approximate values for typical set of parameters, when the number of carriers is large. As mentioned in Remark \ref{remark:3},  $(1+T_h)(\ex^{\bar{n}_{\rm mu}^f/(1+T_h)}-1)/( \ex^{\bar{n}_{\rm mu}^f}-1)$ is  close to one, especially if the number of carriers is large. We can also approximate  $\tau_o$ as $\frac{1-\ex^{-\log( q_l({T_h\over \sigma^2}))\gamma \bar{n}_{\rm mu}}}{\log (q_l({T_h\over \sigma^2}))\gamma \bar{n}_{\rm mu}}$ by approximating  of $\ex^{-s}$  in $\tau(s)$ as $1-s$. Employing these approximations, and ignoring the other non-dominant terms, the upper and lower bound can be simplified as
$1-\ex^{-\bar{n}_{\rm fu}T_h-\bar{n}_{\rm mu} (1-q_u(T_h/\sigma^2))}$
and
$ 1-\ex^{-\bar{n}_{\rm fu}T_h-(1-q_l(T_h/\sigma^2))(\bar{n}_{\rm mu}- 0.5\gamma\bar{n}_{\rm fap} \bar{n}_{\rm mu} -\bar{n}_{\rm mu}^f )},$
respectively.
When the FAP gets close to the MBS, \ie $d_f\ll  R$, $\delta_{\uh_m}\approx 1$, and hence, $q_u(T_h/\sigma^2), q_l(T_h/\sigma^2) \approx \frac{1}{1+T_h/(\sigma^2 \eta)}$, or  $1-q_u(T_h/\sigma^2),1-q_l(T_h/\sigma^2) \approx T_h/(\sigma^2\eta)$, which further simplifies the upper and lower bounds to
$1-\ex^{-T_h(\bar{n}_{\rm fu}+\frac{\bar{n}_{\rm mu} }{\eta \sigma^2})}$,
and
$1-\ex^{-T_h(\bar{n}_{\rm fu}+\frac{\bar{n}_{\rm mu} }{\eta \sigma^2}-\frac{\bar{n}_{\rm mu}^f }{\eta \sigma^2}- \frac{\gamma\bar{n}_{\rm fap} \bar{n}_{\rm mu} }{2\eta \sigma^2})}
$, respectively.
\end{remark}

\subsection{MU served by the MBS}
The upload SIR experienced  by user $u_m\in\Uc_m^{(-f)}$ at the MBS $b_m$ in   subband $i$ is equal to
\begin{align}
\SIR_{m,m}&= {\frac{P_m|h^i_{u_m,b_m}|^2}{n_s} \over I_{m,m}},\label{eq:SIR-m-m}
\end{align}
where, for $i\in\{1,\ldots, n_s\}$,
\begin{align}
 I_{m,m}\hspace{-.2em}&=\hspace{-.4em}\sum_{a_f\in\Ac_f}\hspace{-.2em}\sum\limits_{\uh_m\in \Uc_m(a_f)}\Big({d(\uh_m,a_f)\over d(\uh_m,b_m)}\Big)^{\a}  |h^i_{\uh_m,b_m}|^2\E[\ind_{c_{\uh_m}[i]=c_{u_m}[i]}]  \frac{P_f}{n_s}\nonumber\\
&\hspace{3em}+\sum\limits_{\uh_m\in \Uc_m^{(-f)}\backslash u_m}\frac{P_m}{n_s}|h^i_{\uh_m,b_m}|^2 \E[\ind_{c_{\uh_m}[i]=c_{u_m}[i]}]\nonumber\\
 &=\sum_{a_f\in\Ac_f}\sum\limits_{u_m\in \Uc_m(a_f)}\Big({d(u_m,a_f)\over d(u_m,b_m)}\Big)^{\a} |h^i_{u_m,b_m}|^2 \frac{P_f}{G}\nonumber\\
&\hspace{3em}+  \sum\limits_{\uh_m\in \Uc_m^{(-f)}\backslash u_m}\frac{P_m}{G}|h^i_{\uh_m,b_m}|^2.\label{eq:I-m-m}
\end{align}
In deriving \eqref{eq:I-m-m}, we have ignored the interference caused by the FUs. The reason is that in
 most cases the distance between  FU $u_f$ and its  FAP  is  much  smaller than the distance between $u_f$ and $b_m$.
%-------------------%-------------------%-------------------%-------------------
\begin{theorem}\label{thm:4}
Let $P_{\rm out}^{m,m}\triangleq \P(\SIR_{m,m}<T)$. Then, $P_{\rm out}^{m,m}\geq  1-(1+T_h) (\ex^{-\bar{n}_{\rm mu}T_h/(1+T_h)+\bar{n}_{\rm fap}(\tau'_o-1)}+\ex^{\g^{-1}-\bar{n}_{\rm fap} + \g^{-1}\log (\g \bar{n}_{\rm fap})} )$, and
\[
P_{\rm out}^{m,m}\; \leq\; 1-{(1+T_h)(\ex^{\bar{n}_{\rm mu}/(1+T_h)}-1) \over \ex^{\bar{n}_{\rm mu}}-1}
\]
where $\tau'_o={\ex^{\gamma \bar{n}_{\rm mu}T_h/(1+T_h)}-1 \over  \gamma \bar{n}_{\rm mu}T_h/(1+T_h)}$.

\end{theorem}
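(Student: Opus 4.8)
The plan is to reuse the route of Theorem~\ref{thm:3}: rewrite the outage event as a Laplace transform of the interference $I_{m,m}$ of \eqref{eq:I-m-m}, bound $I_{m,m}$ from below for the lower bound on $P_{\rm out}^{m,m}$ and from above for the upper bound, and then substitute the bounds on $\Phi_{N_m^{b_m}}$ from Theorem~\ref{thm:1}. Since $|h^i_{u_m,b_m}|^2$ is exponential with $\P(|h^i_{u_m,b_m}|^2>x)=\ex^{-x/\sigma^2}$ and is independent of everything else, conditioning on $I_{m,m}$ yields $P_{\rm out}^{m,m}=1-\E[\ex^{-(n_s T/(P_m\sigma^2))I_{m,m}}]$; collecting constants through $G=n_sn_h$, $T_h=T/n_h$ and $\eta=P_f/P_m$ turns the exponent into $-(T_h/\sigma^2)\big(\eta\sum_{a_f\in\Ac_f}\sum_{\uh_m\in\Uc_m(a_f)}(d(\uh_m,a_f)/d(\uh_m,b_m))^{\alpha}|h^i_{\uh_m,b_m}|^2+\sum_{\uh_m\in\Uc_m^{(-f)}\backslash u_m}|h^i_{\uh_m,b_m}|^2\big)$. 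The single-interferer integral $\E[\ex^{-(T_h/\sigma^2)w|h|^2}]=(1+wT_h)^{-1}$ for $w\ge0$ is the basic building block.

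For the lower bound on $P_{\rm out}^{m,m}$ I would discard the nonnegative cross-tier term, keeping only the $N_m^{b_m}-1$ interferers that are macro users served by $b_m$ other than $u_m$, each of unit weight. Conditioned on the FAP positions and on $N_m^{b_m}$ these fading powers are i.i.d.\ exponential, so integrating them out and using that $u_m\in\Uc_m^{(-f)}$ forces $N_m^{b_m}\ge1$ bounds $\E[\ex^{-(T_h/\sigma^2)I_{m,m}}]$ above by $(1+T_h)\,\Phi_{N_m^{b_m}}(\log(1+T_h))$. Evaluating the upper bound of Theorem~\ref{thm:1} at $s=\log(1+T_h)$, where $\ex^{-s}-1=-T_h/(1+T_h)$ and $\tau(s)=\tau'_o$ as in the statement, and subtracting from $1$, then reproduces the claimed lower bound.

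For the upper bound on $P_{\rm out}^{m,m}$ I would instead bound the interference from above. A macro user served by a FAP $a_f$ satisfies $d(\uh_m,a_f)/d(\uh_m,b_m)<\kappa<1$, so its weight $\eta(d(\uh_m,a_f)/d(\uh_m,b_m))^{\alpha}$ is at most $\eta\kappa^{\alpha}\le1$ (femto target power not exceeding the macro one), while a macro user served by $b_m$ has weight exactly $1$; hence, after the normalization, $I_{m,m}\le\sum_{\uh_m\in\Uc_m\backslash u_m}|h^i_{\uh_m,b_m}|^2$, a sum over the $|\Uc_m|-1$ other macro users. Integrating the fadings gives $\E[\ex^{-(T_h/\sigma^2)I_{m,m}}]\ge\E[(1+T_h)^{-(|\Uc_m|-1)}]$, and since $|\Uc_m|$ is Poisson with mean $\bar{n}_{\rm mu}$, conditioned on $|\Uc_m|\ge1$ (the tagged $u_m$ being one of the MUs), this expectation is $\frac{(1+T_h)(\ex^{\bar{n}_{\rm mu}/(1+T_h)}-1)}{\ex^{\bar{n}_{\rm mu}}-1}$, by the same Poisson computation behind \eqref{eq:Phi-Um-af-plus}; subtracting from $1$ yields the stated upper bound.

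The delicate part is the bookkeeping around the tagged macro user $u_m$ and the independence it buys: that conditioning on $u_m\in\Uc_m^{(-f)}$ leaves the other MUs a Poisson process, so the interferer count really is $N_m^{b_m}-1$ (resp.\ $|\Uc_m|-1$) with the stated conditioning on being positive; that the fading powers are independent of the point process and i.i.d.\ across interferers, which is what lets the Laplace transform factor into the single-user integral raised to the random number of interferers; and that $N_m^{b_m}$ is conditionally Poisson given the FAP positions, which is why $\Phi_{N_m^{b_m}}$ of Theorem~\ref{thm:1} is the object to substitute in the lower bound. The one modeling hypothesis that enters is $\eta\kappa^{\alpha}\le1$, needed for the crude domination in the upper bound; without it one would have to carry the weight $\eta\kappa^{\alpha}$ through that estimate.
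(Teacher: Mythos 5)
Your proposal is correct and follows essentially the same route as the paper's proof: lower-bound the interference by the co-tier term from $\Uc_m^{(-f)}\backslash u_m$ and plug the upper bound of Theorem~\ref{thm:1} into the resulting conditional Laplace transform (with the elementary fact that conditioning on $N_m^{b_m}\geq 1$ only decreases $\E[\ex^{-sN_m^{b_m}}]$), and upper-bound the interference by the sum over all of $\Uc_m\backslash u_m$ and compute the Poisson expectation exactly. Your explicit flagging of the hypothesis $\eta\kappa^{\alpha}\leq 1$ is in fact more careful than the paper, which justifies the same domination only by ``$\kappa<1$''.
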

%-------------------%-------------------%-------------------%-------------------
\begin{proof}
The proof is relegated to Appendix~\ref{proof-th4}. Similar to the proof of Theorem 3, here too, we derive upper and lower bound on the experienced SIR, $\SIR_{m,m}$.  In this case, when a macro user is served by the MBS,  $d(u_m,a_f)\leq \kappa d(u_m,b_m)$. Employing this bound, yields a lower bound on $\SIR_{m,m}$.  To derive the upper bound, we only consider the interference caused  by the other MUs served by the MBS.
\end{proof}

\begin{remark}
Typically, the  second term in the upper bound  is negligible  compared to the first term and can be ignored. Approximating by the first-order Taylor expansion, $\tau'_o$ can be approximated as $\tau'_o\approx 1+0.5\gamma \bar{n}_{\rm mu}T_h$, which holds for  large number of carriers per subband. Using these  approximations, the lower bound can be simplified to $1-\ex^{-\bar{n}_{\rm mu}T_h(1-\frac{1}{2}\gamma\bar{n}_{\rm fap})}$.
\end{remark}

\section{Numerical results}\label{sec:numerical}
In this section,  to investigate the uplink network performance and to verify our upper and lower bounds, we present some simulation results. Monte-Carlo computer simulations with $10^5$ realizations are carried out to validate our analytical bounds and illustrate the accuracy of our approximations. The considered scenario  is a two-tier network in a circle of radius $R=1\, {\rm Km}$ with the MBS located at the center. In the ensuing plots, we use the default values in Table \ref{Tab.1}, unless otherwise stated. Fig. \ref{fig:model} shows a realization of the network with the specified  parameters. Note that the size of the coverage area of a femtocell depends on its distance from the MBS. In our model, if a MU falls in the coverage areas of more than one user, it is serviced by the closest one.

\begin{figure}[h]
\begin{center}
\includegraphics[height=7cm]{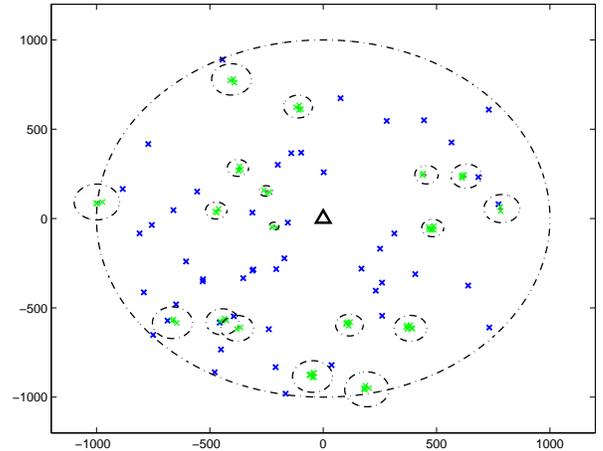}\caption{Sample realization of the network with the parameters specified in Table I. (Blue x: MU, green x: FU, circles (except for the largest one): coverage area of a FAP.)}\label{fig:model}\end{center}
\end{figure}

\begin{table}[b]
\footnotesize
\begin{center}
\caption{Simulations parameters}\label{Tab.1}
\begin{tabular}{|c|c|c|}
  \hline
   % after \\: \hline or \cline{col1-col2} \cline{col3-col4} ...
  Sym.         & Description                              & Default Values \\
\hline \hline
 $\lambda_f$        &   density of FAPs  &  $5\times 10^{-6} \, {\rm m^{-2}}$   \\
\hline
 $\mu_m$            &   density of macrocell users     &    $15\times 10^{-6}\, {\rm m^{-2}}$      \\
\hline
 $\mu_f$            &     density of FUs   &    $0.01\, {\rm m^{-2}}$     \\
 \hline
 $\Delta$             &    ring width of FUs placement   &    $5\, {\rm m}$     \\
 \hline
 $R_f$             &ring  internal radius of  FUs placement&    $10\, {\rm m}$     \\
\hline
$\alpha$             &     path loss exponent  &   4     \\
\hline
$d_f$             &    distance between considered FAP and MBS   &    $700\, {\rm m\;}$     \\
\hline
$T$             &     SIR threshold level  &    $2$     \\
\hline
$n_s$             &   number of subbands    &    $32$     \\
\hline
$n_h$             &   number of subchannels in each subbands    &    $256$     \\
\hline
$\eta$             &    power ratio between FAPs and MBS   &    $25$     \\
\hline
$\kappa$             &    handover parameter   &    $0.1$     \\
\hline
\hline
    \end{tabular}
\end{center}
\end{table}

 Figs.~\ref{versus_T_FAP} and \ref{versus_T_MBS} show the conditional outage probabilities of   MUs serviced by a FAP located at $d_f=700\, {\rm m\;}$ from $b_m$, and the average outage probabilities  of MUs served by the MBS, respectively. Different curves in these figures  correspond to different MUs densities.  Figs.~\ref{versus_T_FAP} and \ref{versus_T_MBS} reflect that our analytical upper and lower bounds (solid and dot curves) are reasonable approximations for all considered SIR thresholds.
% It can also be observed that in all cases the simulated values are closer to the upper bounds than the lower bounds. This implies that to tighten the bounds, one needs to improve the lower bounds.

As expected, increasing the threshold level increases the probability of outage. Clearly this does not imply that  the performance can be improved by lowering  $T$, as  its reduction decreases the achieved  rate as well. In general, there is a trade-off between \emph{expected capacity} \cite{EffrosG:98 ,EffrosG:10} and threshold $T$. The problem of maximizing the expected rate by optimizing  $T$ is  studied  in \cite{ZeinalpourJ:13} for  a single-tier MCFH system . We leave extending those results to multi-tier networks for  future research.

\begin{figure}[t]
\begin{center}
\includegraphics[height=6cm]{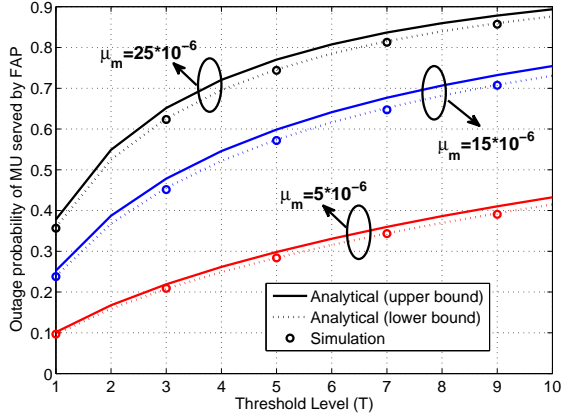}\caption{Conditional outage probability of  MUs served by FAPs located at distance $d_f=700 {\rm m}$ from the MBS versus threshold level $(T)$ for different MU densities.}\label{versus_T_FAP}\end{center}
\end{figure}

\begin{figure}[t]
\begin{center}
\includegraphics[height=6cm]{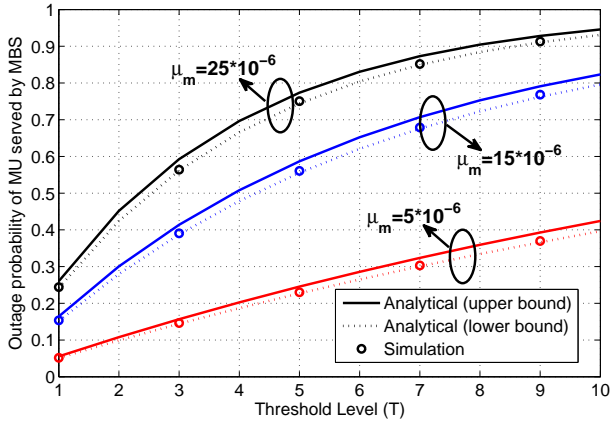}\caption{Average outage probability of  MUs served by MBS versus threshold level $(T)$ for different MUs densities.}\label{versus_T_MBS}\end{center}
\end{figure}

\begin{figure}[t]
\begin{center}
\includegraphics[height=6cm]{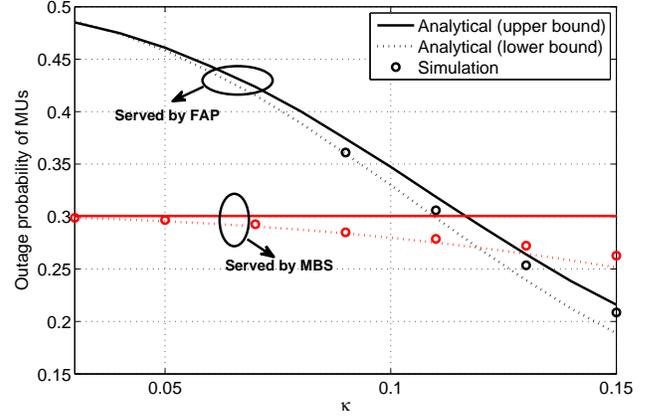}\caption{Average outage probability of MUs versus $\kappa$.}\label{versus_kappa}\end{center}
\end{figure}

\begin{figure}[h]
\begin{center}
\includegraphics[height=6cm]{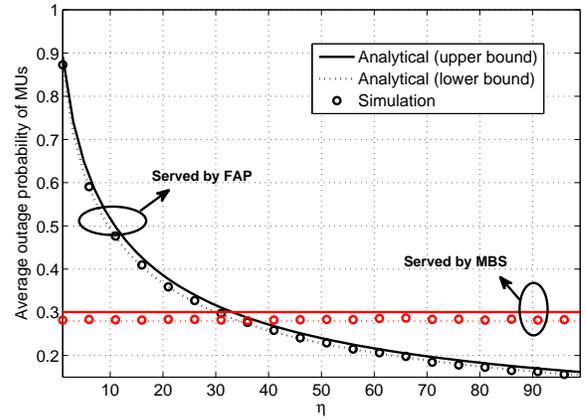}\caption{Average outage performance of MUs versus the power ratio between FAPs and MBS ($\eta={P_f/ P_m}$).}\label{versus_eta}\end{center}
\end{figure}

Fig.~\ref{versus_kappa} demonstrates how  the average outage probabilities of MUs vary with handover parameter $\kappa$. Here too the bounds are consistent with  the simulation results, but the gap increases slightly as $\kappa$ increases. In contrast to the downlink scenario \cite{CheungQ:12}, where  the outage probability is not monotonic in $\kappa$, here, increasing $\kappa$ improves the performance for both MUs and FUs. This result is consistent with \cite{XiaC:10}, where the authors argue  that  in non-orthogonal setups, open access is strictly better than closed access policy.  The  difference between uplink and downlink arises from the fact that in the downlink scenario as the MUs get farther away from the MBS, their received powers and hence SIRs decrease. On the other hand,  in the uplink scenario, \ as they become farther away from the MBS,  due to power control, their transmit powers  increase as well to compensate for the path loss. Naturally,  increasing the handover parameter leads to  more MUs being covered  by FAPs and hence  to lower  co-tier interference.

Note that for plotting the average probability experienced by  MUs served by  FAPs, we have taken the expected value of the upper and lower bounds mentioned in Theorem \ref{thm:3} by considering the randomness in $d_f$.

Fig.~\ref{versus_eta} shows the average outage performance of MUs as a function of  $\eta={P_f/ P_m}$, the power ratio between FAPs and MBS. In these plots we have fixed the transmit power of MBS and because of this, the outage curves of the MUs served by the MBS are almost constant. Obviously the outage of MUs served by FAPs improves by increasing FAPs transmit powers. Note that although increasing  FAPs powers  increases the interferences level, but its effect is not significant for MUs, whose performance is mainly limited by other MUs and not FUs.

Fig.~\ref{versus_df} illustrates the conditional outage probability of MUs served by a FAP,  as a function of the FAP's normalized distance from the MBS. As it can be observed from the figure, at first, the outage probability  increases as the MU gets farther  form the MBS. In fact because of the assumption of constant received power by the MBS in the uplink scenario, as the MU gets farther from the MBS, it will transmit
at a higher power, which leads to the  degradation in the performance of FUs and also MUs served by FAPs.  However, as the femtocells get close to the fringes of the cell, their users outage probabilities start to  improve as well. The reason is that  femtocells that are far away from the MBS have larger coverage areas and therefore, in those regions most  MUs are serviced by nearby FAPs, which results in lower interference caused by them.

\begin{figure}[t]
\begin{center}
\includegraphics[height=6cm]{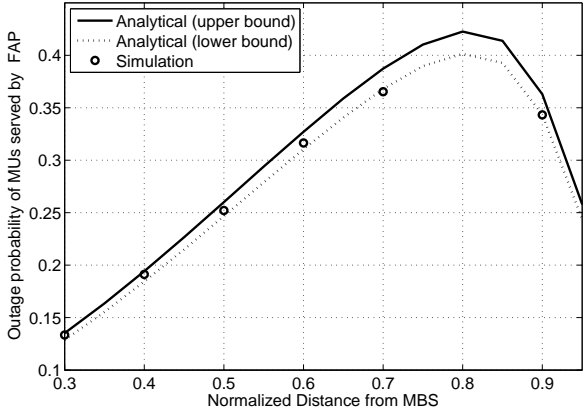}\caption{Conditional  outage probability of a MU served by a FAPs as
a function of the normalized distance of FAP from the MBS.}\label{versus_df}\end{center}
\end{figure}

\begin{figure}[t]
\begin{center}
\includegraphics[height=6cm]{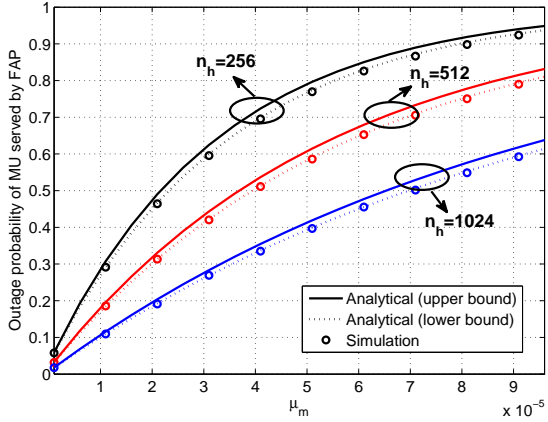}\caption{Conditional outage probability of MUs served by a FAP located at distance $d_f=700 {\rm m}$ from the MBS versus their density.}\label{versus_mu_m1}\end{center}
\end{figure}

\begin{figure}[t]
\begin{center}
\includegraphics[height=6cm]{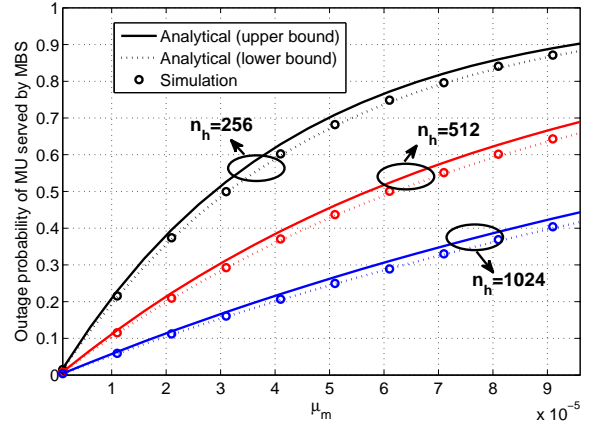}\caption{Average outage probability of MUs served by MBS versus their density.}\label{versus_mu_m2}\end{center}
\end{figure}

Figs.~\ref{versus_mu_m1} and \ref{versus_mu_m2} show the conditional outage probability of MUs served by a FAP located at  $d_f=700\, {\rm m\;}$,  and the average outage probability of MUs served by the MBS, respectively, as a function of MUs density $\mu_m$. Obviously,  increasing the macrocell users density will increase their  outage probabilities as well, because of more co-tier interferences. However their performance can be greatly improved by boosting the number of available sub-channels as can be seen in the figure, too.

%---------------------------%---------------------------%---------------------------
\section{Conclusions}\label{sec:conclusion}
In this paper we investigated the uplink performance of  two-tier networks consisting of a macrocell overlaid by femtocells. We considered a stochastic spatial distribution for MUs, FUs and FAPs, and assumed that they are generated by independent PPPs.  For cell association, we considered an open access policy, where each MU is assigned to its nearest FAP if their distance is less than its distance from the MBS times some factor $\kappa<1$. Under this model we studied the outage performance of the system and derived analytical upper and lower bounds on the outage probabilities of both FUs and MUs. The bounds were shown to be tight by our simulations.

Throughout the paper we considered a fixed threshold $\kappa$ for all MUs. A more general model is when $\kappa$ is not fixed and depends on the FAP. In other words, since $\kappa$ determines the coverage area of FAPs, it is conceivable to consider a scenario where  FAPs are heterogeneous  and can choose their  coverage areas.  For instance, a  FAP can move toward a closed access policy by lowering its corresponding $\kappa$, \ie by only  accepting MUs that are very close.

FAPs are  connected to a central gateway via wired connections. The capacity constraints imposed by this backhaul wired network can potentially affect the cell selection procedure and may impede some femtocells to service all MUs that fell in their coverage area. Characterizing  the effect of this constraint on the system's performance is another  interesting question that is left for future research.

%---------------------------%---------------------------%---------------------------
\section*{Acknowledgments}
The authors would like to thank the anonymous  reviewers for their helpful comments and suggestions,   especially  one of the reviewers who pointed us to reference \cite{FossZ:96}.

%--------------------------------------Appendices--------------------------------------
\setcounter{equation}{0}
\renewcommand{\theequation}{\thesection.\arabic{equation}}

\appendices

\section{Derivation of $s_i$}
As defined in Section \ref{sec:outage}, $s_i$, $i=1,\ldots,t$, denotes the area of the region  corresponding to  $\hat{\delta}^u_{\uh_m}=\kappa_{i-1}^{-1}$ and $\hat{\delta}^l_{\uh_m}=\kappa_{i}^{-1}$ or equivalently  $\kappa_i^{-1}\leq\delta_{\uh_m}\leq \kappa_{i-1}^{-1}$. We showed earlier that the points that satisfy $\delta_{\uh_m}=\kappa^{-1}$, are located on a circle of radius $\frac{\kappa d_f}{1-\kappa^2}$ centered at $(\frac{d_f}{1-\kappa^2},0)$. Hence, $s_i$ is the macrocell coverage area surrounded by two such  circles with $\kappa=\kappa_i$ and $\kappa=\kappa_{i-1}$. Therefore, for $i=1,2,...,t-1$,
\begin{align*}
s_i = {\pi d_f^2 \kappa_i^2 \over (1-\kappa_i^2)^2}-\ind_{\kappa_{i}>1-{d_f \over R}}f({\kappa_i d_f \over {1-\kappa_i^2}},R, {d_f \over {1-\kappa_i^2}})-\sum_{j=0}^{i-1}s_j,
\end{align*}
 and $s_t=R^2(\theta -{1 \over 2}\sin(2\theta))-\sum_{j=0}^{t-1}s_j$, where $\theta=\arccos({d_f \over 2R})$ and

\begin{align*}
f (a,b,c)&\triangleq a^2 \sec^{-1}({2ac\over b^2-a^2-c^2})-b^2 \sec^{-1}({2bc\over b^2+c^2-a^2})\nonumber\\
&+{1\over 2}\sqrt{(a+b+c)(b+c-a)(c+a-b)(a+b-c)}
\end{align*}
\cite{Weisstein}. Similarly,  $s_{i}$, for $i=-t,\ldots,-1$ denotes the area  of the region where $\kappa_{-i-1}\leq\delta_{\uh_m}\leq\kappa_{-i}$, and $\hat{\delta}^l_{\uh_m}=\kappa_{-i-1}$. Here, the geometrical location of the points satisfying  $\delta_{\uh_m}=\kappa$ is a circle with the origin at  $(\frac{-\kappa^2d_f}{1-\kappa^2},0)$ and  radius $\frac{\kappa d_f}{1-\kappa^2}$. Therefore the area of the macrocell zone enclosed by two such the circles with $\kappa=\kappa_{-i}$ and $\kappa=\kappa_{-i-1}$ is given by $s_i \hspace{-.2em} =\hspace{-.2em} {\pi d_f^2 \kappa_{-i}^2 \over (1-\kappa_{-i}^2)^2}-\ind_{\kappa_{-i}>{R \over R+d_f}}f({\kappa_{-i} d_f \over {1-\kappa_{-i}^2}},R, {d_f \kappa_{-i}^2\over {1-\kappa_{-i}^2}}) -\sum_{j=i+1}^{0} s_j,$
for $i=-t+1,...,-1$, and $s_{-t}= R^2(\pi-\theta +{1 \over 2}\sin(2\theta))-\sum_{j=-t+1}^{0}s_j$, where again $\theta=\arccos({d_f \over 2R})$.

\section{Proof of Theorem \ref{thm:3}}\label{proof-th3}
Define event $\Ec=\{d(a_f,b_m)=d_f, N_m^{a_f}\geq 1\}$. Then,  by definition, $P_{\rm out}^{m,f}(d_f)= \P(\SIR_{m,f}<T|  \Ec)$
Combining  the quantizations defined in \eqref{eq:delta-u} and \eqref{eq:delta-l} with  \eqref{eq:SIR-m-fap} and \eqref{eq:I_macro}, we  have
\begin{align}
{n_h|h^i_{u_m,a_f}|^2 \over  K_u } \leq \SIR_{m,f}& \leq  {n_h|h^i_{u_m,a_f}|^2 \over  K_l }\label{eq:lb-SIR-mu-to-fap}
\end{align}
where
\begin{align}
K_u  \triangleq&  \sum\limits_{u_f\in \Uc_f(a_f )} |h^i_{u_f,a_f}|^2  + \sum\limits_{ \uh_m\in \Uc_m(a_f)\backslash u_m} |h^i_{\uh_m,a_f}|^2 \nonumber\\
&+ {1 \over\eta } \sum\limits_{\uh_m\in \Uc_m^{(-f)}} \hspace{-1em}|h^i_{\uh_m,a_f}|^2( \hat{\delta}_{\uh_m}^u )^{\alpha}, \label{eq:K-u}
\end{align}
and
\begin{align}
K_l \triangleq & \sum\limits_{u_f\in \Uc_f(a_f )} |h^i_{u_f,a_f}|^2  + \sum\limits_{ \uh_m\in \Uc_m(a_f)\backslash u_m} |h^i_{\uh_m,a_f}|^2\nonumber\\
& + {1 \over\eta } \sum\limits_{\uh_m\in \Uc_m^{(-f)}} \hspace{-1em}|h^i_{\uh_m,a_f}|^2( \hat{\delta}_{\uh_m}^l )^{\alpha}, \label{eq:K-l}
\end{align}

From \eqref{eq:lb-SIR-mu-to-fap}, we have
\begin{align}
P_{\rm out}^{m,f}(d_f) &\leq \P \Big({n_h|h^i_{u_m,a_f}|^2 \over  K_u }<T| \Ec \Big)\nonumber\\
&= \E \left[\E\Big[ \ind_{|h^i_{u_m,a_f}|^2 <   {K_u T\over n_h}}|  \Ec, K_u \Big]\right]\nonumber\\
&\overset{{\mathrm{(a)}}}{=}1-\E\Big[  {\rm e}^{-{T K_u  \over n_h{\sigma}^2}}\Big|\Ec \Big]\nonumber\\
&=1-\Phi_{K_u}({T  \over n_h{\sigma}^2}|d_f),\label{P-out-upper-steps}
\end{align}
where $(a)$ follows from our assumption that  channel coefficient $|h^i_{u_m,a_f}|$ has a Rayleigh distribution. Here $\Phi_{K_u}(s|d_f)\triangleq \E[\ex^{-s K_u}|\Ec]$ denotes the conditional Laplace transform of  $K_u$ defined in \eqref{eq:K-u}.
Since conditioned on the number of  users in each category, the channel coefficients and $\{\hat{\delta}_{\uh_m}\}_{\uh_m\in\Uc_m^{(-f)}}$ are all independent of each other,  it follows that
\begin{align}
\Phi_{K_u}\hspace{-.2em}(\hspace{-.1em}s|d_f\hspace{-.1em})&\hspace{-.2em}=\hspace{-.2em}\E \hspace{-.2em}
\Big[\hspace{-.2em} \left(\hspace{-.2em}{1 \over 1+s\sigma^2}\hspace{-.2em}\right) ^ {\hspace{-.2em}N_f^{a_f}+N_m^{a_f}-1}\hspace{-1em}\cdot q_u(s)^{N_m^{b_m}}\Big|\Ec\ \hspace{-.2em}\Big].\label{phi-K-u}
\end{align}
where $q_u(s)$ is defined in \eqref{eq:q-u}.
To derive the lower bound, again from  \eqref{eq:lb-SIR-mu-to-fap}, and following a similar steps as in \eqref{P-out-upper-steps}, we derive
\begin{align}
P_{\rm out}^{m,f}(d_f)&\geq  \P \Big({n_h|h^i_{u_m,a_f}|^2 \over  K_l }<T|  \Ec\Big)=1-\E\Big[  {\rm e}^{-{T K  \over n_h{\sigma}^2}}\Big|\Ec\ \Big]\nonumber\\
&=1-\Phi_{K_l}({T  \over n_h{\sigma}^2}|d_f),\label{P-out-lower-steps}
\end{align}
where $\Phi_{K_l}(s|d_f)\triangleq \E[\ex^{-s K_l}|\Ec]$. Also, as argued before in deriving \eqref{phi-K-u}, we have
\begin{align}
\Phi_{K_l}\hspace{-.2em}(\hspace{-.1em}s|d_f\hspace{-.1em})&\hspace{-.2em}=\hspace{-.2em}\E \hspace{-.2em}
\Big[\hspace{-.2em} \left(\hspace{-.2em}{1 \over 1+s\sigma^2}\hspace{-.2em}\right) ^ {\hspace{-.2em}N_f^{a_f}+N_m^{a_f}-1}\hspace{-1em}\cdot q_l(s)^{N_m^{b_m}}\Big|\Ec\ \hspace{-.2em}\Big],
\end{align}
where  $q_l(s)$ is defined in \eqref{eq:q-l}.

Conditioned on the location of $a_f$, $N_f^{a_f}$, $N_m^{a_f}$ and $N_m^{b_m}$ are independent random variables. The independence of $N_f^{a_f}$ and  $(N_m^{a_f},N_m^{b_m})$ follows from our initial assumption that the process of drawing  MUs and FUs are independent. To see the independence of $N_m^{a_f}$ and $N_m^{b_m}$, note that $\Uc_m(a_f)$ denotes the users that are located in a circle of radius $d_f\sqrt{(1-\kappa^2)^{-2}-1}$. (Refer to Fig.~\ref{fig:MU-covered-FAP}.) On the other hand,  conditioned on the location of $a_f$, $\Uc_m^{(-f)}$ denotes users that are not located in any of the circles corresponding to different  FAPs, one of which is the mentioned  circle corresponding to $a_f$. Therefore, the support sets of the  locations of MUs in  $\Uc_m^{(-f)}$ and the MUs in $\Uc_m(a_f)$ do not have any overlap. Therefore, since the macro  users are generated by a PPP process, $N_m^{a_f}$ and $N_m^{b_m}$ are independent random variables. As a result,
\begin{align}
\Phi_{K_u}(s|d_f)=& \Phi_{N_f^{a_f}}(\log(1+s\sigma^2))
\Phi^+_{N_m^{a_f}}(\log (1+s\sigma^2)|d_f)\nonumber\\
&\cdot \Phi_{N_m^{b_m}}(-\log q_u(s)|d_f).\label{eq:Ku-der}
\end{align}
Similarly,
\begin{align}
\Phi_{K_l}(s|d_f)=& \Phi_{N_f^{a_f}}(\log(1+s\sigma^2))) \Phi^+_{N_m^{a_f}}(\log (1+s\sigma^2)|d_f)\nonumber\\
&\cdot \Phi_{N_m^{b_m}}(-\log q_l(s)|d_f).\label{eq:Kl-der}
\end{align}
Combining the bounds derived for $\Phi_{N_m^{b_m}}(s|d_f)$ in Theorem \ref{thm:2} with  \eqref{eq:Phi-Uf-af}, \eqref{eq:Phi-Um-af}, \eqref{P-out-upper-steps}, \eqref{P-out-lower-steps}, \eqref{eq:Ku-der} and \eqref{eq:Kl-der} completes the proof of Theorem \ref{thm:3}.

\section{Proof of Theorem \ref{thm:4}}\label{proof-th4}
For user $u_m\in\Uc_m(a_f)$, by our assignment policy, we should have $d(u_m,a_f)\leq \kappa d(u_m,b_m)$. Therefore, $I_{m,m}$ can be upper-bounded as
\begin{align}
I_{m,m}& \hspace{-.3em}\leq \hspace{-.5em}\sum_{a_f\in\Ac_f}\hspace{-.1em}\sum\limits_{\uh_m\in \Uc_m(a_f)}\hspace{-1.2em}\kappa^{\a} |h^i_{\uh_m,b_m}|^2\frac{P_f}{G}+ \hspace{-1em} \sum\limits_{\uh_m\in \Uc_m^{(-f)}\backslash u_m}\hspace{-1.3em}\frac{P_m}{G}|h^i_{\uh_m,b_m}|^2\nonumber\\
& \leq  \sum\limits_{\uh_m\in \Uc_m \backslash u_m}\frac{P_m}{G}|h^i_{\uh_m,b_m}|^2,
\end{align}
where the last line follows since $\kappa<1$. Also, clearly,  $I_{m,m} \geq   \sum_{\uh_m\in \Uc_m^{(-f)}\backslash u_m}\frac{P_m}{G}|h^i_{\uh_m,b_m}|^2.$
Hence,
\[
{n_h |h^i_{u_m,b_m}|^2\over \bar{K}_u} \leq \SIR_{m,m} < {n_h |h^i_{u_m,b_m}|^2\over \bar{K}_l},
\]
where
$\bar{K}_l=\sum_{\uh_m\in \Uc_m^{(-f)}\backslash u_m}|h^i_{\uh_m,b_m}|^2 $ and $\bar{K}_u=\sum_{\uh_m\in \Uc_m \backslash u_m}|h^i_{\uh_m,b_m}|^2$, and
\[
1-\Phi_{\bar{K}_l}({T  \over n_h\sigma^2}) \leq P_{\rm out}^{m,m} \leq 1-\Phi_{\bar{K}_u}({T  \over n_h\sigma^2}).
\]
To derive the upper bound,  note that $\Phi_{\bar{K}_u}(s)=\E[\ex^{-s\bar{K}_u}]=\E[(\E[\ex^{-s|h^i|^2}])^{|\Uc_m|-1}||\Uc_m|\geq 1]
={\ex^{a\bar{n}_{\rm mu}}-1\over a(\ex^{\bar{n}_{\rm mu}}-1)},$
 where $a=1/(1+s\sigma^2)$. For the lower bound, we employ the upper bound on $\Phi_{\Uc_m^{(-f)}}$ presented in Theorem \ref{thm:1}. The only difference here compared to Theorem \ref{thm:1} is that here we need to condition on $N_m^{b_m}\geq 1$. However, $\E\left[\left.\ex^{-sN_m^{b_m}} \right|N_m^{b_m}\geq 1\right]\leq \E[\ex^{-sN_m^{b_m}}].$
The reason is that for any positive integer-valued random variable $X$ with $p_i=P(X=i)$, $i=0,1,\ldots$, we have  $\E[\ex^{-X}]=p_0+\sum_{i=1}^{\infty} p_i\ex^{-i}$ and $\E[\ex^{-X}|X\geq 1]=(1-p_0)^{-1}\sum_{i=1}^{\infty} p_i\ex^{-i}$. Therefore, $\E[\ex^{-X}]-E[\ex^{-X}|X\geq 1]=p_0(\sum_{i=1}^{\infty}p_i(1-\ex^{-i}))/(1-p_0)\geq 0$. Combining this with Theorem \ref{thm:1} yields the lower bound.
\bibliographystyle{unsrt}
\bibliography{myrefs}

\newpage

\end{document}